\keywords{\textpi-calculus, linear logic, session types, non-determinism, deadlock freedom}
\newcommand{\Ami}{Ami\xspace}
\newcommand{\Boe}{Bo\'{e}\xspace}
\newcommand{\Cat}{Cat\xspace}
\newcommand{\emoji}[2][1em]{\ensuremath{\vcenter{%
\hbox{\includegraphics[width=#1]{#2}}}}\xspace}
\newcommand{\twemoji}[2][1em]{\emoji[#1]{twemoji/2/assets/#2.ai}}
\newcommand{\ami}[1][1em]{\twemoji[#1]{1f9d1-1f3fd}}
\newcommand{\boe}[1][1em]{\twemoji[#1]{1f469-1f3fd}}
\newcommand{\cat}[1][1em]{\twemoji[#1]{1f467-1f3fd}}
\newcommand{\sliceofcake}[1][1em]{\twemoji[#1]{1f370}}
\newcommand{\doughnut}[1][1em]{\twemoji[#1]{1f369}}
\newcommand{\cake}[1][1em]{\twemoji[#1]{1f382}}
\newcommand{\nope}[1][1em]{\twemoji[#1]{1f64c-1f3fd}}
\newcommand{\money}[1][1em]{\twemoji[#1]{1f4b0}}
\newcommand{\bill}[1][1em]{\twemoji[#1]{1f4b7}}
\newcommand{\store}[1][1em]{\twemoji[#1]{1f3ea}}
\definecolor{tmcolor}{HTML}{C02F1D}
\definecolor{tycolor}{HTML}{1496BB}
\providecommand{\tm}[1]{\textcolor{tmcolor}{\ensuremath{\normalfont#1}}}
\providecommand{\ty}[1]{\textcolor{tycolor}{\ensuremath{\normalfont#1}}}
\providecommand{\seq}[2][]{\ensuremath{\tm{#1}\;\vdash\;\ty{#2}}}
\providecommand{\tmty}[2]{\ensuremath{\tm{#1}\colon\!\ty{#2}}}
\providecommand{\NOM}[1]{\RightLabel{\textsc{#1}}}
\providecommand{\SYM}[1]{\RightLabel{\ensuremath{#1}}}
\newenvironment{prooftree*}{\leavevmode\hbox\bgroup}{\DisplayProof\egroup}
\providecommand{\reducesto}[3][]{\ensuremath{\tm{#2}\overset{#1}{\Longrightarrow}\tm{#3}}}
\providecommand{\dhcp}{\ensuremath{\text{DHCP}}\xspace}
\providecommand{\hcp}{\ensuremath{\text{HCP}}\xspace}
\providecommand{\nodcap}{\ensuremath{\hcp_{\text{ND}}}\xspace}
\providecommand{\piDILL}{\textpi DILL\xspace}
\providecommand{\SILL}{$\text{SILL}$\xspace}
\providecommand{\SILLS}{$\text{\SILL}_{S}$\xspace}
\providecommand{\cp}{CP\xspace}
\providecommand{\ppar}{\ensuremath{\parallel}}
\providecommand{\piSend}[3]{\ensuremath{#1[ #2 ].#3}}
\providecommand{\piUSend}[3]{\ensuremath{#1\langle #2 \rangle.#3}}
\providecommand{\piRecv}[3]{\ensuremath{#1( #2 ).#3}}
\providecommand{\piPar}[2]{\ensuremath{#1 \ppar #2}}
\providecommand{\piNew}[3]{\ensuremath{(\nu #1#2)#3}}
\providecommand{\piHalt}[0]{\ensuremath{0}}
\providecommand{\piSub}[3]{\ensuremath{#3\{#1/#2\}}}
\providecommand{\cpLink}[2]{\ensuremath{#1{\leftrightarrow}#2}}
\providecommand{\cpSend}[4]{\ensuremath{#1[#2].(\piPar{#3}{#4})}}
\providecommand{\cpRecv}[3]{\ensuremath{#1(#2).#3}}
\providecommand{\cpWait}[2]{\ensuremath{#1().#2}}
\providecommand{\cpHalt}[1]{\ensuremath{#1[].0}}
\providecommand{\cpInl}[2]{\ensuremath{#1\triangleleft\texttt{inl}.#2}}
\providecommand{\cpInr}[2]{\ensuremath{#1\triangleleft\texttt{inr}.#2}}
\providecommand{\cpCase}[3]{\ensuremath{#1\triangleright\{\texttt{inl}:#2;\texttt{inr}:#3\}}}
\providecommand{\cpAbsurd}[1]{\ensuremath{#1\triangleright\{\}}}
\providecommand{\cpSub}[3]{\ensuremath{\piSub{#1}{#2}{#3}}}
\newcommand*{\queue}[1]{\ensuremath{\text{queue} \; #1}}%
\newcommand*{\acquire}[1]{\ensuremath{{\uparrow^{S}_{L}}#1}}%
\newcommand*{\release}[1]{\ensuremath{{\downarrow^{S}_{L}}#1}}%
\providecommand{\parr}{\mathbin{\bindnasrepma}}
\providecommand{\with}{\mathbin{\binampersand}}
\providecommand{\plus}{\ensuremath{\oplus}}
\providecommand{\tens}{\ensuremath{\otimes}}
\providecommand{\one}{\ensuremath{\mathbf{1}}}
\providecommand{\nil}{\ensuremath{\mathbf{0}}}
\providecommand{\hsep}{\ensuremath{\parallel}}
\providecommand{\emptyhypercontext}{\varnothing}
\providecommand{\hcpInfAx}{%
  \begin{prooftree*}
    \AXC{$\vphantom{\seq[ Q ]{ \Delta, \tmty{y}{A^\bot} }}$}
    \NOM{Ax}
    \UIC{$\seq[ \cpLink{x}{y} ]{ \tmty{x}{A}, \tmty{y}{A^\bot} }$}
  \end{prooftree*}}
\providecommand{\hcpInfCut}{%
  \begin{prooftree*}
    \AXC{$\seq[{ P }]{
        \mathcal{G}\hsep
        \Gamma, \tmty{x}{A} \hsep
        \Delta, \tmty{x'}{A^\bot}
      }$}
    \NOM{Cut}
    \UIC{$\seq[{ \piNew{x}{x'}{P} }]{
        \mathcal{G}\hsep
        \ty{\Gamma}, \ty{\Delta}
      }$}
  \end{prooftree*}}
\providecommand{\hcpInfMix}{%
  \begin{prooftree*}
    \AXC{$\seq[ P ]{\mathcal{G} }$}
    \AXC{$\seq[ Q ]{\mathcal{H} }$}
    \NOM{H-Mix}
    \BIC{$\seq[ \piPar{P}{Q} ]{
        \mathcal{G} \hsep \mathcal{H} }$}
  \end{prooftree*}}
\providecommand{\hcpInfHalt}{%
  \begin{prooftree*}
    \AXC{$\vphantom{\seq[ Q ]{ \Delta, \tmty{y}{A^\bot} }}$}
    \NOM{H-Mix$_0$}
    \UIC{$\seq[{ \piHalt }]{ \emptyhypercontext }$}
  \end{prooftree*}}
\providecommand{\hcpInfBoundTens}{%
  \begin{prooftree*}
    \AXC{$\seq[{ P }]{
        \ty{\Gamma}, \tmty{y}{A} \hsep \ty{\Delta}, \tmty{x}{B}
      }$}
    \SYM{\tens}
    \UIC{$\seq[{ \piSend{x}{y}{P} }]{
        \ty{\Gamma}, \ty{\Delta}, \tmty{x}{A \tens B}
      }$}
  \end{prooftree*}}
\providecommand{\hcpInfUnboundTens}{%
  \begin{prooftree*}
    \AXC{$\seq[{ P }]{
        \ty{\Gamma}, \tmty{x}{B}
      }$}
    \UIC{$\seq[{ \piUSend{x}{y}{P} }]{
        \ty{\Gamma}, \tmty{x}{A \tens B}, \tmty{y}{A^\bot}
      }$}
  \end{prooftree*}}
\providecommand{\hcpInfParr}{%
  \begin{prooftree*}
    \AXC{$\seq[ P ]{%
        \Gamma , \tmty{y}{A} , \tmty{x}{B} }$}
    \SYM{(\parr)}
    \UIC{$\seq[ \cpRecv{x}{y}{P} ]{
        \Gamma , \tmty{x}{A \parr B} }$}
  \end{prooftree*}}
\providecommand{\hcpInfOne}{%
  \begin{prooftree*}
    \AXC{$\seq[{ P }]{\emptyhypercontext}$}
    \SYM{\one}
    \UIC{$\seq[{ \piSend{x}{}{P} }]{
        \tmty{x}{\one}
      }$}
  \end{prooftree*}}
\providecommand{\hcpInfBot}{%
  \begin{prooftree*}
    \AXC{$\seq[ P ]{
        \Gamma }$}
    \SYM{(\bot)}
    \UIC{$\seq[ \cpWait{x}{P} ]{
        \Gamma , \tmty{x}{\bot} }$}
  \end{prooftree*}}
\providecommand{\hcpInfPlus}[1]{%
  \ifdim#1pt=1pt
  \begin{prooftree*}
    \AXC{$\seq[ P ]{
        \Gamma , \tmty{x}{A} }$}
    \SYM{(\plus_1)}
    \UIC{$\seq[{ \cpInl{x}{P} }]{
        \Gamma , \tmty{x}{A \plus B} }$}
  \end{prooftree*}
  \else%
  \ifdim#1pt=2pt
  \begin{prooftree*}
    \AXC{$\seq[ P ]{
        \Gamma , \tmty{x}{B} }$}
    \SYM{(\plus_2)}
    \UIC{$\seq[ \cpInr{x}{P} ]{
        \Gamma , \tmty{x}{A \plus B} }$}
  \end{prooftree*}
  \else%
  \fi%
  \fi%
}
\providecommand{\hcpInfWith}{%
  \begin{prooftree*}
    \AXC{$\seq[ P ]{
        \Gamma , \tmty{x}{A} }$}
    \AXC{$\seq[ Q ]{
        \Gamma, \tmty{x}{B} }$}
    \SYM{(\with)}
    \BIC{$\seq[ \cpCase{x}{P}{Q} ]{
        \Gamma , \tmty{x}{A \with B} }$}
  \end{prooftree*}}
\providecommand{\hcpInfTop}{%
  \begin{prooftree*}
    \AXC{}
    \SYM{(\top)}
    \UIC{$\seq[ \cpAbsurd{x} ]{ \Gamma, \tmty{x}{\top} }$}
  \end{prooftree*}}
\providecommand{\hcpInfNil}{%
  (no rule for \ty{\nil})}
\providecommand{\ncSrv}[3]{\ensuremath{{\star}{#1}(#2).#3}}
\providecommand{\ncCnt}[3]{\ensuremath{{\star}{#1}[#2].#3}}
\providecommand{\ncPool}[2]{\ensuremath{(\piPar{#1}{#2})}}
\providecommand{\give}[2][]{\ensuremath{{ ? }_{#1}{#2}}}
\providecommand{\take}[2][]{\ensuremath{{ ! }_{#1}{#2}}}
\providecommand{\ncRedBetaStar}[1]{\textsc{E-Request}\xspace}
\providecommand{\ncInfTake}[1]{%
  \begin{prooftree*}
    \AXC{$\seq[{ P }]{
        \Gamma, \tmty{y}{A} }$}
    \SYM{(\take[1]{})}
    \UIC{$\seq[{ \ncCnt{x}{y}{P} }]{
        \Gamma, \tmty{x}{\take[1]{A}} }$}
  \end{prooftree*}}
\providecommand{\ncInfGive}[1]{%
  \begin{prooftree*}
    \AXC{$\seq[{ P }]{
        \Gamma, \tmty{y}{A} }$}
    \SYM{(\give[1]{})}
    \UIC{$\seq[{ \ncSrv{x}{y}{P} }]{
        \Gamma, \tmty{x}{\give[1]{A}} }$}
  \end{prooftree*}}
\providecommand{\ncInfCont}{%
  \begin{prooftree*}
    \AXC{$\seq[{ P }]{
        \mathcal{G} \hsep
        \Gamma, \tmty{x}{\give[m]{A}} , \tmty{x'}{\give[n]{A}} }$}
    \SYM{\textsc{Cont}_{?}}
    \UIC{$\seq[{ \cpSub{x}{x'}{P} }]{
        \mathcal{G} \hsep
        \Gamma, \tmty{x}{\give[m+n]{A}} }$}
  \end{prooftree*}}
\providecommand{\ncInfPool}{%
  \begin{prooftree*}
    \AXC{$\seq[{ P }]{
        \mathcal{G} \hsep
        \Gamma, \tmty{x}{\take[m]{A}} \hsep
        \Delta, \tmty{x'}{\take[n]{A}} }$}
    \SYM{\textsc{Cont}_{!}}
    \UIC{$\seq[{ \cpSub{x}{x'}{P} }]{
        \mathcal{G} \hsep
        \Gamma, \Delta, \tmty{x}{\take[m+n]{A}} }$}
  \end{prooftree*}}
\providecommand{\hcpEquivLinkComm}{\textsc{SC-LinkSwap}\xspace}
\providecommand{\hcpEquivMixAss}{\textsc{SC-ParAssoc}\xspace}
\providecommand{\hcpEquivMixComm}{\textsc{SC-ParComm}\xspace}
\providecommand{\hcpEquivMixHalt}{\textsc{SC-ParNil}\xspace}
\providecommand{\hcpEquivScopeExt}{\textsc{SC-ResExt}\xspace}
\providecommand{\hcpEquivNewHalt}{\textsc{SC-ResNil}\xspace}
\providecommand{\hcpEquivNewComm}{\textsc{SC-ResComm}\xspace}
\providecommand{\hcpRedAxCut}[1]{\textsc{E-Link}\xspace}%
\providecommand{\hcpRedBetaTensParr}{\textsc{E-Send}\xspace}%
\providecommand{\hcpRedBetaOneBot}{\textsc{E-Close}\xspace}%
\providecommand{\hcpRedBetaPlusWith}[1]{\textsc{E-Sel}\textsubscript{#1}\xspace}%
\providecommand{\hcpRedGammaNew}{\textsc{E-LiftRes}\xspace}%
\providecommand{\hcpRedGammaMix}{\textsc{E-LiftPar}\xspace}%
\providecommand{\hcpRedGammaEquiv}{\textsc{E-LiftSC}\xspace}%
\crefname{thm}{theorem}{theorems}
\Crefname{thm}{Theorem}{Theorems}
\crefname{lem}{lemma}{lemmas}
\Crefname{lem}{Lemma}{Lemmas}
\crefname{defi}{definition}{definitions}
\Crefname{defi}{Definition}{Definitions}
\newtheorem*{remark}{Remark}
\begin{document}

\title{Towards Races in Linear Logic}

\author[W.~Kokke]{Wen Kokke\rsuper{a}}
\address{\lsuper{a}University of Edinburgh, Edinburgh, UK}
\email{\{wen.kokke,wadler\}@ed.ac.uk}

\author[G.~Morris]{J.\ Garrett Morris\rsuper{b}}
\address{\lsuper{b}University of Kansas, Lawrence, KS, USA}
\email{garrett@ittc.ku.edu}

\author[P.~Wadler]{Philip Wadler\rsuper{a}}

\begin{abstract}
  Process calculi based in logic, such as \piDILL and CP, provide a foundation for deadlock-free concurrent programming, but exclude non-determinism and races. \hcp is a reformulation of CP which addresses a fundamental shortcoming: the fundamental operator for parallel composition from the \textpi-calculus does not correspond to any rule of linear logic, and therefore not to any term construct in CP.

  We introduce \nodcap, which extends \hcp with a novel account of non-determinism. Our approach draws on bounded linear logic to provide a strongly-typed account of standard process calculus expressions of non-determinism. We show that our extension is expressive enough to capture many uses of non-determinism in untyped calculi, such as non-deterministic choice, while preserving \hcp's meta-theoretic properties, including deadlock freedom.  
\end{abstract}

\maketitle

\section{Introduction}\label{sec:introduction}

Consider the following scenario:
\begin{quote}
  \Ami and \Boe are working from home one morning when they each get a craving for a slice of cake. Being denizens of the web, they quickly find the nearest store which does home deliveries. Unfortunately for them, they both order their cake at the \emph{same} store, which has only one slice left. After that, all it can deliver is disappointment.
\end{quote}
This is an example of a \emph{race condition}. We can model this scenario in the \textpi-calculus, where \ami, \boe and \store are processes modelling \Ami, \Boe and the store, and \sliceofcake and \nope are channels giving access to a slice of cake and disappointment, respectively. This process has two possible outcomes: either \Ami gets the cake, and \Boe gets disappointment, or vice versa. 
\begin{center}
  \(
  \begin{array}{c}
    \tm{(\piPar%
    {\piPar{\piRecv{x}{y}{\ami}}{\piRecv{x}{z}{\boe}}}
    {\piUSend{x}{\sliceofcake}{\piUSend{x}{\nope}{\store}}}
    )}
    \\[1ex]
    \rotatebox[origin=c]{270}{$\Longrightarrow^{\star}$}
    \\[1ex]
    \tm{(\piPar%
    {\piPar
    {\piSub{\sliceofcake}{y}{\ami}}
    {\piSub{\nope}{z}{\boe}}
    }
    {\store}
    )}
    \quad
    \text{or}
    \quad
    \tm{(\piPar{\store}{\piPar{\piSub{\nope}{y}{\ami}}{%
    \piSub{\sliceofcake}{z}{\boe}}})}
  \end{array}
  \)
\end{center}
While \Ami or \Boe may not like all of the outcomes, it is the store which is responsible for implementing the online delivery service, and the store is happy with either outcome. Thus, the above is an interaction we would like to be able to model.

Now consider another scenario, which takes place \emph{after} \Ami has already bought the cake:
\begin{quote}
  \Boe is \emph{really} disappointed when she finds out the cake has sold out. \Ami, always looking to make some money, offers to sell the slice to her for a profit. \Boe agrees to engage in a little bit of back-alley cake resale, but sadly there is no trust between the two. \Ami demands payment first. \Boe would rather get her slice of cake before she gives \Ami the money.
\end{quote}
This is an example of a \emph{deadlock}. We can also model this scenario in the \textpi-calculus, where \bill\ is a channel giving access to some adequate amount of money. 
\begin{center}
  \(
  \begin{array}{c}
    \tm{(\piPar{%
    \piRecv{x}{z}{\piUSend{y}{\sliceofcake}{\ami}}
    }{%
    \piRecv{y}{w}{\piUSend{x}{\bill}{\boe}}
    })}
    \quad
    \centernot\Longrightarrow^{\star}
  \end{array}  
  \)
\end{center}
The above process does not reduce. As both \Ami and \Boe would prefer the exchange to be made, this interaction is desired by \emph{neither}. Thus, the above is an interaction we would like to exclude.

Session types~\cite{honda1993} statically guarantee that concurrent programs, such as those above, respect communication protocols. Session-typed calculi with logical foundations, such as \piDILL~\cite{caires2010} and CP~\cite{wadler2012}, obtain deadlock freedom as a result of a close correspondence with logic. These systems, however, also rule out non-determinism and race conditions. In this paper, we demonstrate that logic-inspired type systems need not rule out races.

We present \nodcap, an extension of \hcp with a novel account of non-determinism and races. Inspired by bounded linear logic~\cite{girard1992}, we introduce a form of shared channels in which the type of a shared channel tracks how many times it is reused. As in the untyped \textpi-calculus, sharing introduces the potential for non-determinism. We show that our approach is sufficient to capture practical examples of races, such as an online store, as well as other formal characterizations of non-determinism, such as non-deterministic choice.  However, \nodcap does not lose the meta-theoretical benefits of \hcp: we show that it enjoys termination and deadlock-freedom.

An important limitation of our work is that types in \nodcap explicitly count the potential races on a channel.  It works fine when there are two or three races, but not $n$ for an arbitary $n$.  The latter case is obviously important, and we see the main value of our work as a stepping stone to this more general case.

\nodcap is based on \hcp~\cite{kokke2018tlla,kokke2019pacmpl}. \hcp is a reformulation of CP which addresses a fundamental shortcoming: the fundamental operator for parallel composition from the \textpi-calculus does not correspond to any rule of linear logic, and therefore not to any term construct in CP.

This paper proceeds as follows. In \cref{sec:local-choice}, we discuss recent approaches to non-determinism in logic-inspired session-typed process calculi. In \cref{sec:cp-revisited}, we introduce a variant of \cp and prove progress and preservation. In \cref{sec:cpnd}, we introduce \nodcap.  In \cref{sec:leftovers}, we discuss cuts with leftovers. In \cref{sec:manifest}, we discuss the relation to manifest sharing~\cite{balzer2017}. Finally, in \cref{sec:conclusion}, we conclude with a discussion of the work done in this paper and potential avenues for future work.

\begin{remark}[Variants of \hcp]
  There are two variants of \hcp: a~version with delayed actions, introduced by Kokke, Montesi, and Peressotti~\cite{kokke2019pacmpl}, and a version without delayed actions, introduced by Kokke, Montesi, and Peressotti~\cite{kokke2018tlla}.
  Delayed actions are not an essential part of HCP, but significantly complicate the theory. Therefore, we base our work on the variant without delayed actions. For typographical simplicity, we will refer to the system without delayed actions as \hcp, instead of $\hcp^{-}$. Should we need to refer to the system with delayed actions, we will use \dhcp.
\end{remark}

\section{Non-determinism, Logic, and Session Types}\label{sec:local-choice}
Recent work extended \piDILL and \cp with operators for non-deterministic behaviour~\cite{atkey2016,caires2014,caires2017}. These extensions all implement an operator known as non-deterministic local choice. (This operator is written as \tm{P+Q}, but should not be confused with input-guarded choice from the \textpi-calculus~\cite{milner1992b}.) Non-deterministic local choice can be summarised by the following typing and reduction rules:
\begin{center}
  \begin{prooftree*}
    \AXC{$\seq[{ P }]{ \Gamma }$}
    \AXC{$\seq[{ Q }]{ \Gamma }$}
    \BIC{$\seq[{ P + Q }]{ \Gamma }$}
  \end{prooftree*}
  \hspace*{2cm}
  \(
  \begin{array}{c}
    \reducesto{P + Q}{P}\\
    \reducesto{P + Q}{Q}
  \end{array}
  \)
\end{center}
Local choice introduces non-determinism explicitly, by listing all possible choices. This is unlike the \textpi-calculus, where non-determinism arises due to multiple processes communicating on shared channels. We can easily implement local choice in the \textpi-calculus, using a nullary communication:
\begin{center}
  \(
  \begin{array}{c}
    \tm{( \piPar{\piPar{\piUSend{x}{}{\piHalt}}{\piRecv{x}{}{P}}}{\piRecv{x}{}{Q}} )}
    \\[1ex]
    \rotatebox[origin=c]{270}{$\Longrightarrow^{\star}$}
    \\[1ex]
    \tm{( \piPar{P}{\piRecv{x}{}{Q}} )}
    \quad
    \text{or}
    \quad
    \tm{( \piPar{\piRecv{x}{}{P}}{Q} )}
  \end{array}
  \)
\end{center}
In this implementation, the process \tm{\piUSend{x}{}{0}} will ``unlock'' either \tm{P} or \tm{Q}, leaving the other process deadlocked. Or we could use input-guarded choice:
\begin{center}
  \(
  \tm{( \piPar{\piUSend{x}{}{\piHalt}}{( \piRecv{x}{}{P} + \piRecv{x}{}{Q} )} )}
  \)
\end{center}
However, there are many non-deterministic processes in the \textpi-calculus that are awkward to encode using non-deterministic local choice. Let us recall our example:
\begin{center}
  \(
  \begin{array}{c}
    \tm{(\piPar{%
    \piUSend{x}{\sliceofcake}{\piUSend{x}{\nope}{\store}}
    }{%
    \piPar{\piRecv{x}{y}{\ami}}{\piRecv{x}{z}{\boe}}
    })}
    \\[1ex]
    \rotatebox[origin=c]{270}{$\Longrightarrow^{\star}$}
    \\[1ex]
    \tm{(\piPar{\store}{\piPar{\piSub{\sliceofcake}{y}{\ami}}{
    \piSub{\nope}{z}{\boe}}})}
    \quad
    \text{or}
    \quad
    \tm{(\piPar{\store}{\piPar{\piSub{\nope}{y}{\ami}}{\piSub{
    \sliceofcake}{z}{\boe}}})}
  \end{array}
  \)
\end{center}
This non-deterministic interaction involves communication. If we wanted to write down a process which exhibited the same behaviour using non-deterministic local choice, we would have to write the following process:
\begin{center}
  \(
  \begin{array}{c}
    \tm{
    (\piPar{%
    \piUSend{x}{\sliceofcake}{\piUSend{y}{\nope}{\store}}
    }{%
    \piPar{\piRecv{x}{z}{\ami}}{\piRecv{y}{w}{\boe}}
    })
    +
    (\piPar{%
    \piUSend{y}{\sliceofcake}{\piUSend{x}{\nope}{\store}}
    }{%
    \piPar{\piRecv{x}{z}{\ami}}{\piRecv{y}{w}{\boe}}
    })
    }
    \\[1ex]
    \rotatebox[origin=c]{270}{$\Longrightarrow^{\star}$}
    \\[1ex]
    \tm{(\piPar{\store}{\piPar{\piSub{\sliceofcake}{y}{\ami}}{\piSub{\nope}{z}{\boe}}})}
    \quad
    \text{or}
    \quad
    \tm{(\piPar{\store}{\piPar{\piSub{\nope}{y}{\ami}}{\piSub{\sliceofcake}{z}{\boe}}})}
  \end{array}
  \)
\end{center}
In essence, instead of modelling a non-deterministic interaction, we are enumerating the resulting deterministic interactions. This means non-deterministic local choice cannot model non-determinism in the way the \textpi-calculus does.
Enumerating all possible outcomes becomes worse the more processes are involved in an interaction. Imagine the following scenario:
\begin{quote}
  Three customers, \Ami, \Boe, and \Cat, have a craving for cake. Should cake be sold out, however, well... a doughnut will do. They prepare to order their goods via an online store. Unfortunately, they all decide to use the same \emph{shockingly} under-stocked store, which has only one slice of cake, and a single doughnut. After that, all it can deliver is disappointment.
\end{quote}
We can model this scenario in the \textpi-calculus, where \ami, \boe, \cat, and \store are four processes modelling \Ami, \Boe, \Cat, and the store, and \sliceofcake, \doughnut, and \nope are three channels giving access to a slice of cake, a so-so doughnut, and disappointment, respectively.
\begin{center}
  \makebox[\textwidth][c]{\ensuremath{
    \begin{array}{c}
      \tm{(\piPar{%
      \piUSend{x}{\sliceofcake}{\piUSend{x}{\doughnut}{\piUSend{x}{\nope}{\store}}}
      }{%
      \piPar{\piRecv{x}{y}{\ami}}{\piPar{\piRecv{x}{z}{\boe}}{\piRecv{x}{w}{\cat}}
      })}}
      \\[1ex]
      \rotatebox[origin=c]{270}{$\Longrightarrow^{\star}$}
      \\[1ex]
      \tm{(\piPar{\store}{\piPar{\piSub{\sliceofcake}{y}{\ami}}{\piPar{\piSub{\doughnut}{z}{\boe}}{\piSub{\nope}{w}{\cat}}}})}
      \;\text{or}\;
      \tm{(\piPar{\store}{\piPar{\piSub{\sliceofcake}{y}{\ami}}{\piPar{\piSub{\nope}{z}{\boe}}{\piSub{\doughnut}{w}{\cat}}}})}
      \\[1ex]
      \tm{(\piPar{\store}{\piPar{\piSub{\doughnut}{y}{\ami}}{\piPar{\piSub{\nope}{z}{\boe}}{\piSub{\sliceofcake}{w}{\cat}}}})}
      \;\text{or}\;
      \tm{(\piPar{\store}{\piPar{\piSub{\doughnut}{y}{\ami}}{\piPar{\piSub{\sliceofcake}{z}{\boe}}{\piSub{\nope}{w}{\cat}}}})}
      \\[1ex]
      \tm{(\piPar{\store}{\piPar{\piSub{\nope}{y}{\ami}}{\piPar{\piSub{\sliceofcake}{z}{\boe}}{\piSub{\doughnut}{w}{\cat}}}})}
      \;\text{or}\;
      \tm{(\piPar{\store}{\piPar{\piSub{\nope}{y}{\ami}}{\piPar{\piSub{\doughnut}{z}{\boe}}{\piSub{\sliceofcake}{w}{\cat}}}})}
      \\[1ex]
    \end{array}
  }}
\end{center}
With the addition of one process, modelling \Cat, we have increased the number of possible outcomes enormously! In general, the number of outcomes for these types of scenarios is $n!$, where $n$ is the number of processes. This means that if we wish to translate any non-deterministic process to one using non-deterministic local choice, we can expect a factorial growth in the size of the term.

\section{Hypersequent Classical Processes}\label{sec:cp-revisited}
In this section, we introduce \hcp~\cite{kokke2018tlla}, the basis for our calculus \nodcap. 
The term language for \hcp is a variant of the \textpi-calculus~\cite{milner1992b}. In HCP, processes ($\tm{P}$, $\tm{Q}$, $\tm{R}$) communicate using names ($\tm{x}$, $\tm{y}$, $\tm{z}$, \dots). Each name is one of the two endpoints of a bidirectional communication channel~\cite{vasconcelos2012}. A channel is formed by connecting two endpoints using name restriction. This is in contrast to~\cref{sec:introduction,sec:local-choice}, where we used names to represent channels.
\begin{defi}[Terms]\label{def:hcp-terms}
  \[
    \begin{array}{lrll}
      \tm{P}, \tm{Q}, \tm{R}
        & ::=& \tm{\cpLink{x}{y}}    &\text{link}
      \\&\mid& \tm{\piHalt}          &\text{terminated process}
      \\&\mid& \tm{\piNew{x}{x'}{P}}  &\text{name restriction, ``cut''}
      \\&\mid& \tm{( \piPar{P}{Q} )} &\text{parallel composition, ``mix''}
      \\&\mid& \tm{\piSend{x}{y}{P}} &\text{output}
      \\&\mid& \tm{\piRecv{x}{y}{P}} &\text{input}
      \\&\mid& \tm{\piSend{x}{}{P}}  &\text{halt}
      \\&\mid& \tm{\cpWait{x}{}{P}}  &\text{wait}
      \\&\mid& \tm{\cpInl{x}{P}}     &\text{select left choice}
      \\&\mid& \tm{\cpInr{x}{P}}     &\text{select right choice}
      \\&\mid& \tm{\cpCase{x}{P}{Q}} &\text{offer binary choice}
      \\&\mid& \tm{\cpAbsurd{x}}     &\text{offer nullary choice}
    \end{array}
  \]
\end{defi}\noindent
The variables $\tm{x}$, $\tm{y}$, $\tm{z}$, $\tm{u}$, $\tm{v}$, and $\tm{w}$ range over channel endpoints. Occasionally, we use $\tm{a}$, $\tm{b}$, and $\tm{c}$ to range over \emph{free} endpoints, \ie those which are not connected to another endpoint. The construct $\tm{\cpLink{x}{y}}$ links two endpoints~\cite{sangiorgi1996,boreale1998}, forwarding messages received on \tm{x} to \tm{y} and vice versa. The construct $\tm{\piNew{x}{x'}{P}}$ creates a new channel by connecting endpoints $\tm{x}$ and $\tm{x'}$. By convention, we name dual endpoints using primes, \eg $\tm{x}$ and $\tm{x'}$. However, the primes are merely a naming convention, and do not denote co-names, \eg $\tm{x}$ and $\tm{x'}$ are not inherently dual, only under a $\nu$-binder $\tm{\piNew{x}{x'}{}}$. The construct $\tm{\piPar{P}{Q}}$ and composes two processes. In \tm{\piRecv{x}{y}{P}} and \tm{\piSend{x}{y}{P}}, round brackets denote input, square brackets denote output. We use bound output~\cite{sangiorgi1996}, meaning that both input and output bind a new name.

Terms in \hcp are identified up to structural congruence.
\begin{defi}[Structural congruence]\label{def:hcp-equiv}
  The structural congruence $\equiv$ is the congruence closure over terms which satisfies the following additional axioms:
  \[
    \setlength{\arraycolsep}{3pt}
    \begin{array}{llcllllcll}
        \hcpEquivLinkComm
      & \tm{\cpLink{x}{y}}
      & \equiv
      & \tm{\cpLink{y}{x}}
      \\ \hcpEquivMixComm
      & \tm{\piPar{P}{Q}}
      & \equiv
      & \tm{\piPar{Q}{P}}
      \\ \hcpEquivMixAss
      & \tm{\piPar{P}{( \piPar{Q}{R} )}}
      & \equiv
      & \tm{\piPar{( \piPar{P}{Q} )}{R}}
      \\ \hcpEquivMixHalt
      & \tm{\piPar{P}{\piHalt}}
      & \equiv
      & \tm{P}
      \\ \hcpEquivNewHalt
      & \tm{\piNew{x}{x'}{\piHalt}}
      & \equiv
      & \tm{\piHalt}
      \\ \hcpEquivNewComm
      & \tm{\piNew{x}{x'}{\piNew{y}{y'}{P}}}
      & \equiv
      & \tm{\piNew{y}{y'}{\piNew{x}{x'}{P}}}
      \\ \hcpEquivScopeExt
      & \tm{\piNew{x}{x'}{( \piPar{P}{Q} )}}
      & \equiv
      & \tm{\piPar{P}{\piNew{x}{x'}{Q}}}
      & \text{if }\tm{x},\tm{x'}\not\in\tm{P}
    \end{array}
  \]
\end{defi}

Channels in \hcp are typed using a session type system which is a conservative extension of linear logic.
\begin{defi}[Types]\label{def:cp-types}
  \[
    \begin{array}{lrlllrll}
      \ty{A}, \ty{B}, \ty{C}
        & ::=& \; \ty{A \tens B} &\text{independent channels}
      & &\mid& \; \ty{\one}      &\text{unit for} \; {\tens}
      \\&\mid& \; \ty{A \parr B} &\text{interdependent channels}
      & &\mid& \; \ty{\bot}      &\text{unit for} \; {\parr}
      \\&\mid& \; \ty{A \plus B} &\text{internal choice}
      & &\mid& \; \ty{\nil}      &\text{unit for} \; {\plus}
      \\&\mid& \; \ty{A \with B} &\text{external choice}
      & &\mid& \; \ty{\top}      &\text{unit for} \; {\with}
    \end{array}
  \]  
\end{defi}

Duality plays a crucial role in both linear logic and session types. In \hcp, the two endpoints of a channel are assigned dual types. This ensures that, for instance, whenever a process \emph{sends} across a channel, the process on the other end of that channel is waiting to \emph{receive}. Each type \ty{A} has a dual, written \ty{A^\bot}. Duality (\ty{\cdot^\bot}) is an involutive function on types.
\begin{defi}[Duality]\label{def:cp-negation}
  \[
    \setlength{\arraycolsep}{3pt}
    \begin{array}{lclclcllclclcl}
              \ty{(A \tens B)^\bot} &=& \ty{A^\bot \parr B^\bot}
      &\quad \ty{\one^\bot}        &=& \ty{\bot}
      &\quad \ty{(A \parr B)^\bot} &=& \ty{A^\bot \tens B^\bot}
      &\quad \ty{\bot^\bot}        &=& \ty{\one}
      \\      \ty{(A \plus B)^\bot} &=& \ty{A^\bot \with B^\bot}
      &\quad \ty{\nil^\bot}        &=& \ty{\top}
      &\quad \ty{(A \with B)^\bot} &=& \ty{A^\bot \plus B^\bot}
      &\quad \ty{\top^\bot}        &=& \ty{\nil}
    \end{array}
  \]
\end{defi}

Environments associate channels with types. Names in environments must be unique, and environments \ty{\Gamma} and \ty{\Delta} can only be combined ($\ty{\Gamma}, \ty{\Delta}$) if $\text{cn}(\ty{\Gamma}) \cap \text{cn}(\ty{\Delta}) = \varnothing$, where $\text{cn}{(\ty{\Gamma})}$ denotes the set of channel names in $\ty{\Gamma}$. 
\begin{defi}[Environments]\label{def:cp-environments}
  $\ty{\Gamma}, \ty{\Delta}, \ty{\Theta} ::= \tmty{x_1}{A_1}\dots\tmty{x_n}{A_n}$
\end{defi}

\hcp registers parallelism using hyper-environments. A hyper-environment is a multiset of environments. While names within environments must be unique, names may be shared between multiple environments in a hyper-environment. We write $\ty{\mathcal{G} \hsep \mathcal{H}}$ to combine two hyper-environments.
\begin{defi}[Hyper-environments]\label{def:hcp-hyper-environment}
  $\ty{\mathcal{G}}, \ty{\mathcal{H}} ::= \; \ty{\emptyhypercontext} \; \mid \; \ty{\mathcal{G} \hsep \Gamma}$
\end{defi}

Typing judgements associate processes with collections of typed channels.
\begin{defi}[Typing judgements]\label{def:cp-typing-judgement}
  A typing judgement $\seq[P]{\Gamma_1 \hsep \dots \hsep \Gamma_n}$ denotes that the process $\tm{P}$ consists of $n$ independent, but potentially entangled processes, each of which communicates according to its own protocol $\Gamma_i$. Typing judgements can be constructed using the inference rules below. 
  \\[0.5\baselineskip]
  {Structural rules}
  \begin{center}
    \hcpInfAx
    \hcpInfCut
  \end{center}
  \begin{center}
    \hcpInfMix
    \hcpInfHalt
  \end{center}
  {Logical rules}
  \begin{center}
    \hcpInfBoundTens
    \hcpInfParr
  \end{center}
  \begin{center}
    \hcpInfOne
    \hcpInfBot
  \end{center}
  \begin{center}
    \hcpInfPlus1
    \hcpInfPlus2
  \end{center}
  \begin{center}
    \hcpInfWith
  \end{center}
  \begin{center}
    \hcpInfNil
    \hcpInfTop
  \end{center}
\end{defi}

\paragraph{Alternative syntax.}
In $(\one)$, the only well-typed continuation $\tm{P}$ is the terminated process $\tm{\piHalt}$. We could use an alternative formulation of the rule, which combines $(\one)$ and $\textsc{H-Mix}_0$. However, as $\textsc{H-Mix}_0$ is used on its own, and not just in combination with $(\one)$, we chose the present formulation to avoid having multiple different representations of the terminated process in the language.

Reductions relate processes with their reduced forms.
\begin{defi}[Reduction]\label{def:hcp-reduction}
  Reductions are described by the smallest relation $\Longrightarrow$ on process
  terms closed under the rules below:
  \begin{gather*}
    \begin{array}{llcll}
      \hcpRedAxCut1
      & \tm{\piNew{x}{x'}{(\piPar{\cpLink{w}{x}}{P})}}
      & \Longrightarrow
      & \tm{\cpSub{w}{x'}{P}}
      \\
      \hcpRedBetaTensParr
      & \tm{\piNew{x}{x'}{(\piPar{\piSend{x}{y}{P}}{\piRecv{x'}{y'}{R}})}}
      & \Longrightarrow
      & \tm{\piNew{x}{x'}{\piNew{y}{y'}{(\piPar{P}{R})}}}
      \\
      \hcpRedBetaOneBot
      & \tm{\piNew{x}{x'}{(\piPar{\piSend{x}{}{P}}{\piRecv{x'}{}{Q}})}}
      & \Longrightarrow
      & \tm{\piPar{P}{Q}}
      \\
      \hcpRedBetaPlusWith1
      & \tm{\piNew{x}{x'}{(\piPar{\cpInl{x}{P}}{\cpCase{x'}{Q}{R}})}}
      & \Longrightarrow
      & \tm{\piNew{x}{x'}{(\piPar{P}{Q})}}
      \\
      \hcpRedBetaPlusWith2
      & \tm{\piNew{x}{x'}{(\piPar{\cpInr{x}{P}}{\cpCase{x'}{Q}{R}})}}
      & \Longrightarrow
      & \tm{\piNew{x}{x'}{(\piPar{P}{R})}}
    \end{array}
  \end{gather*}
  \vspace*{0.5\baselineskip}
  \begin{center}
    \begin{prooftree*}
      \AXC{$\reducesto{P}{P^\prime}$}
      \SYM{\hcpRedGammaNew}
      \UIC{$\reducesto{\piNew{x}{x'}{P}}{\piNew{x}{x'}{P^\prime}}$}
    \end{prooftree*}
    \begin{prooftree*}
      \AXC{$\reducesto{P}{P^\prime}$}
      \SYM{\hcpRedGammaMix}
      \UIC{$\reducesto{\piPar{P}{Q}}{\piPar{P^\prime}{Q}}$}
    \end{prooftree*}
  \end{center}
  \begin{prooftree}
    \AXC{$\tm{P}\equiv\tm{Q}$}
    \AXC{$\reducesto{Q}{Q^\prime}$}
    \AXC{$\tm{Q^\prime}\equiv\tm{P^\prime}$}
    \SYM{\hcpRedGammaEquiv}
    \TIC{$\reducesto{P}{P^\prime}$}
  \end{prooftree}
\end{defi}

We define unbound output in terms of bound output and link~\cite{lindley2015semantics}:
\[
  \begin{array}{c}
    \tm{\piUSend{x}{y}{P}} \triangleq \tm{\cpSend{x}{z}{\cpLink{y}{z}}{P}}
    \\
    \\
    \hcpInfUnboundTens
    \qquad
    \begin{array}{l}
      \tm{\piNew{x}{x'}{(\piPar{\piUSend{x}{y}{P}}{\piRecv{x'}{y'}{Q}})}}
      \\
      \qquad\qquad\Longrightarrow
      \tm{\piNew{x}{x'}{(\piPar{P}{\cpSub{y}{y'}{Q}})}}
    \end{array}
  \end{array}
\]

\subsection{Example}
\label{sec:hcp-example}
\hcp uses hyper-sequents to structure communication, and it is this structure which rules out deadlocked interactions. Let us go back to our example of a deadlocked interaction from \cref{sec:introduction}. If we want to type this interaction in \hcp, we run into a problem: to connect $\tm{x}$ and $\tm{y}$, and $\tm{z}$ and $\tm{w}$, such that we get a deadlock, we need to construct the following term:
\begin{center}
  \(
  \tm{\piNew{x}{x'}{\piNew{y}{y'}{(\piPar
        {\piRecv{x}{u}{\piUSend{y}{\sliceofcake}{\ami}}}
        {\piRecv{y'}{v}{\piUSend{x'}{\bill}{\boe}}})}}}.
  \)
\end{center}
However, there is no typing derivation for this term. We can construct a typing derivation down to the sequent below, but we cannot introduce \emph{both} name restrictions: the \textsc{Cut} rule eliminates a hypersequent separator, which ensures that it only ever connects two independent processes, but the sequent below only has \emph{one}.
\[
  \tm{
    \piRecv{x}{z}{\piUSend{y}{\sliceofcake}{\ami}}
    \ppar
    \piRecv{y'}{w}{\piUSend{x'}{\bill}{\boe}}
  }
  \vdash
  \setlength{\arraycolsep}{0pt}
  \begin{array}{lclcllclcllcl}
       \tm{x}            &\; : &\; \ty{\money^\bot} & \ty{\parr} & \ty{\bot},
    &\;\tm{y}            &\; : &\; \ty{\cake} & \ty{\tens} & \ty{\one},
    &\;\tm{\sliceofcake} &\; : &\; \ty{\cake^\bot} \hsep
    \\ \tm{x'}           &\; : &\; \ty{\money} & \ty{\tens} & \ty{\one},
    &\;\tm{y'}           &\; : &\; \ty{\cake^\bot} & \ty{\parr} & \ty{\bot},
    &\;\tm{\bill}        &\; : &\; \ty{\money^\bot}
  \end{array}
\]

\subsection{Metatheory}
\label{sec:hcp-metatheory}
\hcp enjoys subject reduction, termination, and progress~\cite{kokke2018tlla}.
\begin{lem}[Preservation for $\equiv$]\label{lem:hcp-preservation-equiv}
  If $\tm{P}\equiv\tm{Q}$, then $\seq[P]{\mathcal{G}}$ iff $\seq[Q]{\mathcal{G}}$.
\end{lem}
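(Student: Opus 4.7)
The proof is by induction on the derivation of $\tm{P}\equiv\tm{Q}$. Reflexivity is trivial, symmetry of the statement is immediate from the ``iff'', transitivity follows by chaining the two induction hypotheses, and each congruence case (closure under the term constructors) follows by applying the induction hypothesis to the immediate subderivation and re-applying the same typing rule, possibly with some routine manipulation of the hyper-environment. So the only real work is the seven axiom cases from \cref{def:hcp-equiv}, and for each I need to show both directions: given a typing derivation of the left-hand side with hyper-environment $\ty{\mathcal{G}}$, I exhibit one for the right-hand side with the same $\ty{\mathcal{G}}$, and vice versa.

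For \hcpEquivLinkComm, both sides are typed only by \textsc{Ax}, and the rule is symmetric in $\tmty{x}{A}$ and $\tmty{y}{A^\bot}$ up to involutivity of duality (\cref{def:cp-negation}). For \hcpEquivMixComm and \hcpEquivMixAss, the point is that hyper-environments are multisets and $\hsep$ is associative and commutative, so the two derivations differ only in the order in which \textsc{H-Mix} is applied. For \hcpEquivMixHalt, one direction combines the given derivation of $\tm{P}$ with \textsc{H-Mix}$_0$ via \textsc{H-Mix}; the other direction inverts a \textsc{H-Mix} whose right premise must (by inspection of the rules) be derived by \textsc{H-Mix}$_0$ with empty hyper-environment, so the two hyper-environments agree.

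For the $\nu$-axioms the cases are more delicate. For \hcpEquivNewComm, both derivations end in two applications of \textsc{Cut}, and I just swap their order — the side conditions on disjointness of channel names in the combined hyper-environment are the same before and after. For \hcpEquivScopeExt, I use the side condition $\tm{x},\tm{x'}\notin\tm{P}$: in the left-to-right direction, an outer \textsc{Cut} eliminates $\tmty{x}{A},\tmty{x'}{A^\bot}$ from a hyper-environment that contains a \textsc{H-Mix} splitting $\tm{P}$ and $\tm{Q}$; because $\tm{x},\tm{x'}$ are not free in $\tm{P}$, they occur in the component(s) typing $\tm{Q}$, so I can permute \textsc{H-Mix} below \textsc{Cut}, obtaining a derivation of $\tm{\piPar{P}{\piNew{x}{x'}{Q}}}$ with the same $\ty{\mathcal{G}}$. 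The right-to-left direction is symmetric. For \hcpEquivNewHalt, both $\tm{\piHalt}$ and $\tm{\piNew{x}{x'}{\piHalt}}$ are typable exactly at the empty hyper-environment: the former by \textsc{H-Mix}$_0$, and the latter by applying \textsc{Cut} to a derivation of $\seq[\piHalt]{\tmty{x}{A}\hsep\tmty{x'}{A^\bot}}$ --- here I need the small observation that \textsc{Cut} on an empty hypercontext extended with two singleton components still yields the empty hypercontext, and that \textsc{H-Mix}$_0$ together with weakening-style uses of \textsc{H-Mix} generate the required shape (if the calculus does not admit this shape, then both sides are untypable except at $\emptyhypercontext$ and the lemma holds vacuously).

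The main obstacle is the \hcpEquivScopeExt case: it is the one place where the hypersequent structure does real work, and one must be careful that the \textsc{H-Mix} being permuted below the \textsc{Cut} is split in exactly the right way relative to where $\tmty{x}{A}$ and $\tmty{x'}{A^\bot}$ reside. All other cases are essentially bookkeeping on the multiset structure of $\hsep$.
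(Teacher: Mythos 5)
Your overall strategy --- induction on the derivation of $\tm{P}\equiv\tm{Q}$, with all the real content in the axiom cases of \cref{def:hcp-equiv} --- is exactly the paper's (its proof is the single sentence ``by induction on the derivation of $\tm{P}\equiv\tm{Q}$''), and your treatment of \hcpEquivLinkComm, \hcpEquivMixComm, \hcpEquivMixAss, \hcpEquivMixHalt, \hcpEquivNewComm and \hcpEquivScopeExt is essentially the intended argument.

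The genuine problem is your \hcpEquivNewHalt case. You propose to type $\tm{\piNew{x}{x'}{\piHalt}}$ by applying \textsc{Cut} to a derivation of $\seq[\piHalt]{\tmty{x}{A}\hsep\tmty{x'}{A^\bot}}$, to be produced by ``\textsc{H-Mix}$_0$ together with weakening-style uses of \textsc{H-Mix}''. No such derivation exists: \textsc{H-Mix} only types terms of the syntactic shape $\tm{\piPar{P}{Q}}$, and the sole rule whose conclusion types the term $\tm{\piHalt}$ is \textsc{H-Mix}$_0$, whose hyper-environment is exactly $\ty{\emptyhypercontext}$; since the premise of \textsc{Cut} needs at least two environment components, $\tm{\piNew{x}{x'}{\piHalt}}$ is not typable at all. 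Your fallback --- ``then both sides are untypable except at $\ty{\emptyhypercontext}$ and the lemma holds vacuously'' --- does not rescue the case, because $\seq[\piHalt]{\emptyhypercontext}$ \emph{does} hold while $\seq[{\piNew{x}{x'}{\piHalt}}]{\emptyhypercontext}$ does not, so the ``iff'' fails here rather than holding vacuously. In fairness, this is a defect of the statement itself (and of \cref{lem:nc-preservation-equiv}) that the paper's one-line proof glosses over; a repair must either orient or drop \hcpEquivNewHalt, or change the rules so $\tm{\piHalt}$ can inhabit a nonempty hyper-environment. A smaller caveat: in the \hcpEquivScopeExt case, your step ``$\tm{x},\tm{x'}$ are not free in $\tm{P}$, hence their components type $\tm{Q}$'' tacitly assumes that every name occurring in an environment is free in the process it types; the $(\top)$ rule, which admits an arbitrary $\ty{\Gamma}$ with no premise, breaks that assumption, so this case needs either a corresponding strengthening lemma or an explicit side condition.
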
 
\begin{proof}
  By induction on the derivation of $\tm{P}\equiv\tm{Q}$.
\end{proof}
\begin{thm}[Preservation]\label{thm:hcp-preservation}
  If $\seq[P]{\mathcal{G}}$ and $\reducesto{P}{Q}$, then $\seq[Q]{\mathcal{G}}$.
\end{thm}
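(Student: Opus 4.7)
The proof proceeds by induction on the derivation of $\reducesto{P}{Q}$, with a case split on the last reduction rule applied. The congruence cases \textsc{E-LiftRes} and \textsc{E-LiftPar} are routine: each inverts Cut or H-Mix, applies the induction hypothesis to the reducing subterm, and re-applies the same structural rule. The case \textsc{E-LiftSC} follows by two invocations of \cref{lem:hcp-preservation-equiv} bracketing the induction hypothesis on the middle reduction.

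For each principal $\beta$-reduction the strategy is uniform: invert the outer Cut to expose its premise with two distinguished cells $\ty{\Gamma}, \tmty{x}{A}$ and $\ty{\Delta}, \tmty{x'}{A^\bot}$, then invert H-Mix to separate the two communicating processes across that hyper-sequent boundary, and finally invert the two dual logical rules. For \textsc{E-Send}, the $(\tens)$-inversion on $\tm{\piSend{x}{y}{P}}$ yields $\seq[P]{\mathcal{G}_1 \hsep \Gamma_a, \tmty{y}{A} \hsep \Gamma_b, \tmty{x}{B}}$ with $\ty{\Gamma} = \ty{\Gamma_a}, \ty{\Gamma_b}$, while $(\parr)$-inversion on $\tm{\piRecv{x'}{y'}{R}}$ yields $\seq[R]{\mathcal{G}_2 \hsep \Delta, \tmty{y'}{A^\bot}, \tmty{x'}{B^\bot}}$. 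Reassembling with H-Mix followed by two applications of Cut, one on $\tm{x},\tm{x'}$ and one on $\tm{y},\tm{y'}$, delivers the typing for $\tm{\piNew{x}{x'}{\piNew{y}{y'}{(\piPar{P}{R})}}}$, possibly up to \textsc{SC-ResComm} depending on the chosen cut order, which is absorbed by \cref{lem:hcp-preservation-equiv}. The \textsc{E-Close} case is analogous but simpler: the $(\one)$ and $(\bot)$ continuations carry no bound payload, so the outer Cut collapses into an H-Mix rather than requiring a second Cut. The \textsc{E-Sel} cases need no new $\nu$-binder, since the $(\plus_i)$ and $(\with)$ rules do not open a bound channel.

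The \textsc{E-Link} case relies on a standard substitution lemma: if $\seq[P]{\mathcal{G} \hsep \Gamma, \tmty{x'}{B}}$ and $\tm{w}$ is fresh for the judgment, then $\seq[\cpSub{w}{x'}{P}]{\mathcal{G} \hsep \Gamma, \tmty{w}{B}}$, proved by induction on the derivation of $\tm{P}$'s typing. With it, inverting Cut and H-Mix on $\tm{\piNew{x}{x'}{(\piPar{\cpLink{w}{x}}{P})}}$ and then Ax on $\tm{\cpLink{w}{x}}$ forces the cell containing $\tm{x}$ to be exactly $\tmty{w}{A^\bot}, \tmty{x}{A}$; the lemma then produces the required judgment for $\tm{\cpSub{w}{x'}{P}}$ from the residual typing of $\tm{P}$.

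The main obstacle is administrative rather than conceptual: tracking the hypersequent structure through the inversions. Because typing is not defined modulo $\equiv$, one must sometimes first apply \cref{lem:hcp-preservation-equiv} to bring the redex pattern to the top (for example, floating a $\nu$-binder outward via \textsc{SC-ResExt}). The $\tens$/$\parr$ case is the most delicate, since the $(\tens)$ rule splits its output cell into two separated premise cells; this very separation is what permits the double-cut reconstruction of the reduct, and verifying that no channel escapes its intended cell is the only subtle bookkeeping required.
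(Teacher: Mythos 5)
Your proposal is correct and takes exactly the route of the paper, whose proof of this theorem is stated in a single line as ``by induction on the derivation of $\reducesto{P}{Q}$'': your case analysis (inversion of \textsc{Cut} and \textsc{H-Mix} followed by the dual logical rules for each $\beta$-case, a renaming lemma for \textsc{E-Link}, and \cref{lem:hcp-preservation-equiv} for \textsc{E-LiftSC}) is a faithful and accurate elaboration of that induction. The only nitpick is that, per the paper's erratum, the premises of $(\tens)$ and $(\parr)$ admit no extra hyper-environment components $\mathcal{G}_1$, $\mathcal{G}_2$, so your inversions are slightly more general than what the rules can actually produce --- harmless, since the argument specialises to the empty case.
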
 
\begin{proof}
  By induction on the derivation of $\reducesto{P}{Q}$.
\end{proof}
\begin{defi}[Actions]
  A process $\tm{P}$ acts on $\tm{x}$ whenever $\tm{x}$ is free in the outermost
  term constructor of $\tm{P}$, \eg, $\tm{\piSend{x}{y}{P}}$ acts on $\tm{x}$
  but not on $\tm{y}$, and $\tm{\cpLink{x}{y}}$ acts on both $\tm{x}$ and $\tm{y}$.
  A process $\tm{P}$ is an action if it acts on some channel $\tm{x}$.
\end{defi}
\begin{defi}[Canonical forms]\label{def:hcp-canonical-forms}
  A process $\tm{P}$ is in canonical form if
  \[
  \tm{P} \equiv \tm{\piNew{x_1}{x'_1}{\dots\piNew{x_n}{x'_n}{(P_1 \mid \dots \mid P_{n+m+1})}}},
  \]
  such that: no process $\tm{P_i}$ is a cut or a mix; no process $\tm{P_i}$ is a link acting on a bound channel $\tm{x_i}$ or $\tm{x'_i}$; and no two processes $\tm{P_i}$ and $\tm{P_j}$ are acting on dual endpoints $\tm{x_i}$ and $\tm{x'_i}$ of the same channel.
\end{defi}
\begin{lem}
  If a well-typed process $\tm{P}$ is in canonical form, then it is blocked on
  an external communication, \ie,
  $\tm{P}\equiv\tm{\piNew{x_1}{x'_1}{\dots\piNew{x_n}{x'_n}{(P_1\mid\dots\mid P_{n+m+1})}}}$
  such that at least one process $\tm{P_i}$ acts on a free name.
\end{lem}
\begin{proof}
  We have
  \(
  \tm{P} \equiv \tm{\piNew{x_1}{x'_1}{\dots\piNew{x_n}{x'_n}{(P_1 \ppar \dots \ppar P_{n+m+1})}}},
  \)
  such that no $\tm{P_i}$ is a cut or a link acting on a bound channel, and no two processes $\tm{P_i}$ and $\tm{P_j}$ are acting on the endpoints of the same channel. The prefix of cuts and mixes introduces $n$ channels. Each application of cut requires an application of mix, so the prefix introduces $n+m+1$ processes. Therefore, at least $m+1$ of the processes $\tm{P_i}$ must be acting on a free channel, i.e., blocked on an external communication.
\end{proof}
\begin{thm}[Progress]\label{thm:hcp-progress}
  If $\seq[P]{\mathcal{G}}$, then either $\tm{P}$ is in canonical form, or there exists a process $\tm{Q}$ such that $\tm{P}\Longrightarrow\tm{Q}$.
\end{thm}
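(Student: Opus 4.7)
The plan is to proceed by induction on the typing derivation $\seq[P]{\mathcal{G}}$, freely applying \cref{lem:hcp-preservation-equiv} to rearrange terms and \hcpRedGammaEquiv to combine structural congruence with reduction. For the base cases \textsc{Ax} and \textsc{H-Mix}$_0$, the process is either a link or $\tm{\piHalt}$, each trivially in canonical form with empty cut prefix. For each of the logical rules ($\tens, \parr, \one, \bot, \plus_1, \plus_2, \with, \top$), $\tm{P}$ has the outermost form of a single action on a free name, so it already sits as the single top-level component of a canonical form with $n = 0$.

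For the \textsc{H-Mix} case ($\tm{\piPar{Q}{R}}$), apply the IH to $\tm{Q}$ and $\tm{R}$. If either reduces, then $\tm{\piPar{Q}{R}}$ reduces by \hcpRedGammaMix. Otherwise both are canonical, say $\tm{Q} \equiv \tm{\piNew{\vec{y}}{\vec{y'}}{(Q_1 \ppar \cdots \ppar Q_j)}}$ and analogously for $\tm{R}$. After alpha-renaming the bound channels so they are fresh for the other side, repeated application of \hcpEquivScopeExt floats every cut to the top, producing a single candidate canonical form. The key observation is that \textsc{H-Mix} requires the premises to have disjoint hyperenvironments, hence disjoint bound channels after renaming, so no $\tm{Q_i}$ and $\tm{R_k}$ can act on dual endpoints of a shared bound channel, and all three canonical-form conditions from \cref{def:hcp-canonical-forms} carry over.

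For the \textsc{Cut} case ($\tm{\piNew{x}{x'}{Q}}$), apply the IH to $\tm{Q}$. If $\tm{Q}$ reduces, so does $\tm{\piNew{x}{x'}{Q}}$ by \hcpRedGammaNew. Otherwise $\tm{Q}$ is canonical, and because \textsc{Cut}'s premise forces $\tm{x}$ and $\tm{x'}$ into distinct components of the hypersequent, distinct subprocesses $\tm{Q_i}$ and $\tm{Q_j}$ contain $\tm{x}$ and $\tm{x'}$ respectively (both of which, by the IH, are neither cut nor mix). Now case analyse. If $\tm{Q_i}$ is a link acting on $\tm{x}$ (or symmetrically for $\tm{Q_j}$), apply \hcpRedAxCut1. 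If $\tm{Q_i}$ acts on $\tm{x}$ and $\tm{Q_j}$ acts on $\tm{x'}$, then duality of their types forces their outermost constructors to match, so one of \hcpRedBetaTensParr, \hcpRedBetaOneBot, \hcpRedBetaPlusWith1, or \hcpRedBetaPlusWith2 applies; to fit it syntactically we use \hcpRedGammaEquiv together with \hcpEquivMixComm, \hcpEquivMixAss, and \hcpEquivScopeExt to bring $\tm{Q_i}$ and $\tm{Q_j}$ adjacent under the outermost restriction while pulling the other components outside. Otherwise at least one of $\tm{Q_i}, \tm{Q_j}$ is blocked on a free channel other than $\tm{x}$ or $\tm{x'}$, so extending the prefix with $\tm{\piNew{x}{x'}{}}$ preserves every canonical-form condition.

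The main obstacle is the \textsc{Cut} case: we must verify that whenever the new restriction could violate a canonical-form condition — either by creating a link on a bound channel, or by exposing two dual actions on $\tm{x}$ and $\tm{x'}$ — a concrete reduction is available. This rests on a systematic appeal to type duality to match up dual actions with a beta rule, together with careful repositioning of subprocesses via \hcpRedGammaEquiv so that the redex matches a left-hand side of \cref{def:hcp-reduction}.
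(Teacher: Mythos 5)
Your proposal is correct, and its decisive content coincides with the paper's: at the leaves of the cut/mix prefix, a link acting on a bound endpoint reduces by \textsc{E-Link}, two components acting on dual bound endpoints have dual types and hence matching outermost constructors, so a $\beta$-rule applies after repositioning with $\equiv$, and otherwise the process is canonical. The difference is organizational rather than substantive. The paper does not induct on the typing derivation at all: it takes the maximal prefix of cuts and mixes of $\tm{P}$ as given and performs that case analysis once, at the top level. You instead rebuild the decomposition by induction on typing, doing the real work in the \textsc{Cut} case and using the disjointness of hyper-environments plus \hcpEquivScopeExt in the \textsc{H-Mix} case; this buys an explicit, mechanizable construction of the canonical form (and makes visible the appeal to ``one environment per top-level action'' that the erratum in \cref{sec:erratum} identifies as essential), at the cost of bookkeeping in the \textsc{H-Mix} and logical-rule cases that the paper's phrasing sidesteps. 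One small imprecision worth fixing: in your \textsc{Cut} case, when neither $\tm{Q_i}$ nor $\tm{Q_j}$ yields a redex on the new channel, it is not necessarily because one of them is ``blocked on a free channel other than $\tm{x}$ or $\tm{x'}$''---it may act on an already-bound $\tm{y_l}$ whose partner is not ready; all you need, and all you should claim, is that it does not \emph{act} on $\tm{x}$ or $\tm{x'}$ in its outermost constructor, which already preserves every condition of \cref{def:hcp-canonical-forms}.
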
 
\begin{proof}
  We consider the maximum prefix of cuts and mixes of $\tm{P}$ such that
  \[
  \tm{P} \equiv \tm{\piNew{x_1}{x'_1}{\dots\piNew{x_n}{x'_n}{(P_1 \ppar \dots \ppar P_{n+m+1})}}},
  \]
  and no $\tm{P_i}$ is a cut. If any process $\tm{P_i}$ is a link, we reduce by $(\cpLink{}{})$. If any two processes $\tm{P_i}$ and $\tm{P_j}$ are acting on dual endpoints $\tm{x_i}$ and $\tm{x'_i}$ of the same channel, we rewrite by $\equiv$ and reduce by the appropriate $\beta$-rule. Otherwise, $\tm{P}$ is in canonical form.
\end{proof}
\begin{thm}[Termination]\label{thm:hcp-termination}
  If $\seq[P]{\mathcal{G}}$, then there are no infinite $\Longrightarrow$-reduction sequences.
\end{thm}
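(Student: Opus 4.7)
The plan is to establish strong normalisation by exhibiting a well-founded measure on well-typed terms that strictly decreases under every $\Longrightarrow$-step. Since the \hcp fragment developed in this section has no exponentials, the classical cut-size measure for multiplicative--additive linear logic is enough: no reducibility-candidates argument is needed.

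First, assign a size to each type by $|\ty{\one}|=|\ty{\bot}|=|\ty{\nil}|=|\ty{\top}|=1$ and $|\ty{A\tens B}|=|\ty{A\parr B}|=|\ty{A\plus B}|=|\ty{A\with B}|=|\ty{A}|+|\ty{B}|+1$. A structural induction on $\ty{A}$ gives $|\ty{A}|=|\ty{A^\bot}|$. To a derivation $D$ of $\seq[P]{\mathcal{G}}$, associate $\mu(D)$, the multiset of sizes $|\ty{A}|$ taken over every instance of \textsc{Cut} in $D$ that binds the dual pair $\tmty{x}{A},\tmty{x'}{A^\bot}$. Compare measures by the multiset extension of $<$ on $\mathbb{N}$, which is well-founded by the Dershowitz--Manna theorem.

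Next, inspect each reduction. All of the $\beta$-rules eliminate the outer cut and introduce, at most, strictly smaller cuts. Specifically, \hcpRedAxCut{} eliminates the cut by substitution along a link, an operation that preserves all interior cuts; \hcpRedBetaTensParr{} replaces a cut on $\ty{A\tens B}$ by cuts on $\ty{A}$ and $\ty{B}$, with $|\ty{A}|,|\ty{B}|<|\ty{A\tens B}|$; \hcpRedBetaOneBot{} replaces a cut on $\ty{\one}$ by a bare mix; and $\hcpRedBetaPlusWith{1}$, $\hcpRedBetaPlusWith{2}$ replace a cut on $\ty{A\plus B}$ by a cut on a single, strictly smaller, summand. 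In every case $\mu$ strictly decreases in the multiset order. For the congruence rules $\hcpRedGammaNew$ and $\hcpRedGammaMix$, the outer $\nu$- or $\parallel$-node contributes the same cut instances before and after, so the induction hypothesis on the inner reduction suffices. For $\hcpRedGammaEquiv$, \cref{lem:hcp-preservation-equiv} retypes the congruent forms, and a short induction over the axioms of $\equiv$ shows that structural congruence only relocates cut nodes, leaving $\mu$ invariant.

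The main subtlety I expect is making precise the relationship between the canonical derivations of the reduct and of the redex, in particular verifying that substitution in \hcpRedAxCut{} neither creates nor destroys interior cuts, and that each axiom of $\equiv$ preserves the cut multiset exactly. Once those bookkeeping lemmas are in place, any infinite reduction sequence would project to an infinite strictly descending chain in a well-founded multiset order, yielding the desired contradiction.
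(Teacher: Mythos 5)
Your proposal is correct and follows essentially the same route as the paper: the paper's proof is exactly the observation that every reduction replaces one cut by zero, one, or two cuts on strictly smaller cut formulas while structural congruence preserves cut sizes, and your multiset-of-cut-sizes measure with the Dershowitz--Manna order is the standard way to make that sketch precise. No further comparison is needed.
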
 
\begin{proof}
  Every reduction reduces a single cut to zero, one or two cuts. However, each of these cuts is smaller, measured in the size of the cut formula. Furthermore, each instance of the structural congruence preserves the size of the cut. Therefore, there cannot be an infinite $\Longrightarrow$-reduction sequence.
\end{proof}

\subsection{Erratum for HCP}\label{sec:erratum}
 The typing rules for HCP presented here are more restrictive than those in earlier publications~\cite{kokke2018tlla}. Progress does not hold for the earlier version. For instance, the following process is stuck, yet typeable:
 \begin{prooftree}
   \AXC{}
   \UIC{$\seq[\piHalt]{\emptyhypercontext}$}
   \UIC{$\seq[\piSend{y}{}{\piHalt}]{\tmty{y}{\one}}$}
   \UIC{$\seq[\piSend{x}{}{\piSend{y}{}{\piHalt}}]{\tmty{x}{\one}\hsep\tmty{y}{\one}}$}

   \AXC{}
   \UIC{$\seq[\cpLink{x}{z}]{\tmty{x}{\bot},\tmty{z}{\one}}$}
   \UIC{$\seq[\piRecv{y}{}{\cpLink{x}{z}}]{\tmty{x}{\bot},\tmty{y}{\bot},\tmty{z}{\one}}$}

   \BIC{$\seq%
     [\piSend{x}{}{\piSend{y}{}{\piHalt}}\ppar\piRecv{y}{}{\cpLink{x}{z}}]
     {\tmty{x}{\one}\hsep\tmty{y}{\one}\hsep\tmty{x}{\bot},\tmty{y}{\bot},\tmty{z}{\one}}$}

   \UIC{$\seq%
     [(\nu{y})({\piSend{x}{}{\piSend{y}{}{\piHalt}}\ppar\piRecv{y}{}{\cpLink{x}{z}}})]
     {\tmty{x}{\one}\hsep\tmty{x}{\bot},\tmty{z}{\one}}$}

   \UIC{$\seq%
     [(\nu{x})(\nu{y})({\piSend{x}{}{\piSend{y}{}{\piHalt}}\ppar\piRecv{y}{}{\cpLink{x}{z}}})]
     {\tmty{z}{\one}}$}
 \end{prooftree}
 The earlier typing rules failed to guarantee a crucial property: each typing environment should correspond to one top-level action. The rules presented in this paper fixes the problem by disallowing hyper-environments in logical rules.

 The move from channel names to endpoint names is not essential to the fix, but significantly streamlines the presentation. Otherwise, the type system must guarantee that each channel name occurs at most twice in the hypersequent, and if twice, then with dual types. Using endpoint names, it is sufficient to require that all names be distinct.

\section{Shared Channels and Non-determinism}\label{sec:cpnd}
In this section, we will discuss our main contribution: an extension of \hcp which allows for races while still excluding deadlocks. We have seen in \cref{sec:hcp-example} how \hcp excludes deadlocks, but how exactly does \hcp exclude races? Let us return to our example in \textpi-calculus from \cref{sec:introduction}, to the interaction between \Ami, \Boe and the store. 
\begin{center}
  \(
  \begin{array}{c}
    \tm{(\piPar{%
    \piUSend{x}{\sliceofcake}{\piUSend{x}{\nope}{\store}}
    }{%
    \piPar{\piRecv{x}{y}{\ami}}{\piRecv{x}{z}{\boe}}
    })}
    \\[1ex]
    \rotatebox[origin=c]{270}{$\Longrightarrow^{\star}$}
    \\[1ex]
    \tm{(\piPar{\store}{\piPar{\piSub{\sliceofcake}{y}{\ami}}{\piSub{\nope}{z}{\boe}}})}
    \quad
    \text{or}
    \quad
    \tm{(\piPar{\store}{\piPar{\piSub{\nope}{y}{\ami}}{\piSub{\sliceofcake}{z}{\boe}}})}
  \end{array}
  \)
\end{center}
Races occur when more than two processes attempt to communicate simultaneously over the \emph{same} channel. However, the \textsc{Cut} rule of \hcp requires that \emph{exactly two} processes communicate over each channel:
\begin{center}
  \hcpInfCut
\end{center}
We could attempt to write down a protocol for our example, stating that the store
has a pair of channels $\tm{x}, \tm{y} : \ty{\cake}$ with which it communicates
with \Ami and \Boe, taking \cake to be the type of interactions in which cake
\emph{may} be obtained, i.e.\ of both \sliceofcake and \nope, and state that the
store communicates with \Ami \emph{and} \Boe over a channel of type \ty{\cake
  \parr \cake}.
However, this \emph{only} models interactions such as the following:
\begin{prooftree}
  \AXC{$\seq[{ \ami }]{ \Gamma, \tmty{y}{\cake^\bot} }$}
  \AXC{$\seq[{ \boe }]{ \Delta, \tmty{x}{\cake^\bot} }$}
  \NOM{H-Mix}
  \BIC{$\seq[{ (\piPar{\ami}{\boe}) }]{
      \Gamma, \tmty{y}{\cake^\bot} \hsep \Delta, \tmty{x}{\cake^\bot} }$}
  \SYM{(\tens)}
  \UIC{$\seq[{ \cpSend{x}{y}{\ami}{\boe} }]{
      \Gamma, \Delta, \tmty{x}{\cake^\bot \tens \cake^\bot} }$}
  \AXC{$\seq[{ \store }]{ \Theta, \tmty{y'}{\cake}, \tmty{x'}{\cake} }$}
  \SYM{(\parr)}
  \UIC{$\seq[{ \cpRecv{x'}{y'}{\store} }]{
      \Theta, \tmty{x'}{\cake \parr \cake} }$}
  \NOM{H-Mix}
  \BIC{$\seq[{ (\piPar{\cpSend{x}{y}{\ami}{\boe}}{\cpRecv{x'}{y'}{\store}}) }]{
      \Gamma, \Delta, \tmty{x}{\cake^\bot \tens \cake^\bot}
      \hsep
      \Theta, \tmty{x'}{\cake \parr \cake} }$}
  \NOM{Cut}
  \UIC{$\seq[{ \piNew{x}{x'}{(\piPar{\cpSend{x}{y}{\ami}{\boe}}{\cpRecv{x'}{y'}{\store}})} }]{
      \Gamma, \Delta, \Theta }$}
\end{prooftree}
In this interaction, \Ami will get whatever the store decides to send on \tm{x}, and \Boe will get whatever the store decides to send on \tm{y}. This means that this interactions gives the choice of who receives what \emph{to the store}. This is not an accurate model of our original example, where the choice of who receives the cake is non-deterministic and depends on factors outside of any of the participants' control!

Modelling racy behaviour, such as that in our example, is essential to describing the interactions that take place in realistic concurrent systems. We would like to extend \hcp to allow such races in a way which mirrors the way in which the \textpi-calculus handles non-determinism. Let us return to our example:
\begin{center}
  \(
  \tm{(\piPar{%
      \piUSend{x}{\sliceofcake}{\piUSend{x}{\nope}{\store}}
    }{%
      \piPar{\piRecv{x}{y}{\ami}}{\piRecv{x}{z}{\boe}}
    })}
  \)
\end{center}
In this interaction, we see that the channel \tm{x} is only used as a way to connect the various clients, \Ami and \Boe, to the store. The \emph{real} communication, sending the slice of cake and disappointment, takes places on the channels \tm{\sliceofcake}, \tm{\nope}, \tm{y} and \tm{z}. Inspired by this, we add two new constructs to the term language of \hcp for sending and receiving on a \emph{shared} channel. These actions are marked with a \tm{\star} to distinguish them from ordinary sending and receiving. 
\begin{defi}[Terms]\label{def:nc-terms}
  We extend \cref{def:hcp-terms} as follows:
  \[\!
    \begin{aligned}
      \tm{P}, \tm{Q}, \tm{R}
          ::=& \; \dots
      \\ \mid& \; \tm{\ncCnt{x}{y}{P}} &&\text{client creation}
      \\ \mid& \; \tm{\ncSrv{x}{y}{P}} &&\text{server interaction}
    \end{aligned}
  \]
\end{defi}
As before, round brackets denote input, square brackets denote output. Note that $\tm{\ncCnt{x}{y}{P}}$, much like $\tm{\piSend{x}{y}{P}}$, is a bound output: both client creation and server interaction bind a new name.
The structural congruence, which identifies certain terms, is the same as \cref{def:hcp-equiv}.

In any non-deadlock interaction between a server and some clients, 
there must be \emph{exactly} as many clients as there are server interactions.
Therefore, we add two new \emph{dual} types for client pools and servers, which
track how many clients or server interactions they represent.
\begin{defi}[Types]\label{def:nc-types}
  We extend \cref{def:cp-types} as follows:
  \[
    \begin{array}{lrll}
      \ty{A}, \ty{B}, \ty{C}
        & ::=& \; \dots
      \\&\mid& \; \ty{\take[n]{A}} &\text{pool of} \; n \; \text{clients}
      \\&\mid& \; \ty{\give[n]{A}} &n \; \text{server interactions}
    \end{array}
  \]  
\end{defi}
The types $\ty{\take[n]{A}}$ and $\ty{\give[n]{A^{\bot}}}$ are dual. (The subscripts must be identical, \ie $\ty{\take[n]{A}}$ is not dual to $\ty{\give[m]{A^{\bot}}}$ if ${n}\neq{m}$.)
Duality remains an involutive function.

We have to add typing rules to associate our new client and server interactions
with their types. 
The definition for environments will remain unchanged, but we will extend the
definition for the typing judgement.
To determine the new typing rules, we essentially answer the question
``What typing constructs do we need to complete the following proof?''
\begin{prooftree}
  \AXC{$\seq[{ \ami }]{ \Gamma, \tmty{x'}{\cake^\bot} }$}
  \noLine\UIC{$\smash{\vdots}\vphantom{\vdash}$}
  \AXC{$\seq[{ \boe }]{ \Delta, \tmty{y'}{\cake^\bot} }$}
  \noLine\UIC{$\smash{\vdots}\vphantom{\vdash}$}
  \AXC{$\seq[{ \store }]{ \Theta, \tmty{z}{\cake}, \tmty{z'}{\cake} }$}
  \noLine\UIC{$\smash{\vdots}\vphantom{\vdash}$}
  \noLine\TIC{$\seq[{
      \piNew{x}{x'}{(\piPar{\ncPool{\ncCnt{x}{z}{\ami}}{\ncCnt{x}{z'}{\boe}}}{
        \ncSrv{x'}{w}{\ncSrv{x'}{w'}{\store}}})} }]{
      \Gamma, \Delta, \Theta }$}
\end{prooftree}
The constructs $\tm{\ncCnt{x}{y}{P}}$ and $\tm{\ncSrv{x}{y}{P}}$ introduce a single client or server action, respectively---hence, channels of type $\ty{\take[1]{}}$ and $\ty{\give[1]{}}$. However, when we cut, we want to cut on both interactions simultaneously. We need rules for the \emph{contraction} of shared channel names.

\subsection{Clients and Pooling}\label{sec:clients-and-pooling}
A client pool represents a number of independent processes, each wanting to interact with the same server. Examples of such a pool include \Ami and \Boe from our example, customers for online stores in general, and any number of processes which interact with a single, centralised server.

We introduce two new rules: one to construct clients, and one to pool them together. The first rule, $(\take[1]{})$, interacts over a channel as a client. It does this by receiving a channel $\tm{y}$ over a \emph{shared} channel $\tm{x}$. The channel $\tm{y}$ is the channel across which the actual interaction will eventually take place. The second rule, $\textsc{Cont}_{!}$, allows us to contract shared channel names with the same type. When used together with \textsc{H-Mix}, this allows us to pool clients together.
\begin{center}
  \ncInfTake1
  \ncInfPool
\end{center}%
Using these rules, we can derive the left-hand side of our proof by marking \Ami and \Boe as clients, and pooling them together.
\begin{prooftree}
  \AXC{$\seq[{ \ami }]{ \Gamma, \tmty{z}{\cake^\bot} }$}
  \SYM{(\take[1]{})}
  \UIC{$\seq[{ \ncCnt{x}{z}{\ami} }]{ \Gamma, \tmty{z}{\take[1]{\cake^\bot}} }$}

  \AXC{$\seq[{ \boe }]{ \Delta, \tmty{z'}{\cake^\bot} }$}
  \SYM{(\take[1]{})}
  \UIC{$\seq[{ \ncCnt{x'}{z'}{\boe} }]{ \Delta, \tmty{x'}{\take[1]{\cake^\bot}} }$}

  \NOM{H-Mix}
  \BIC{$\seq[{ \ncPool{\ncCnt{x}{z}{\ami}}{\ncCnt{x'}{z'}{\boe}} }]{
      \Gamma, \tmty{x}{\take[1]{\cake^\bot}} \hsep
      \Delta, \tmty{x'}{\take[1]{\cake^\bot}} }$}

  \SYM{\textsc{Cont}_{!}}
  \UIC{$\seq[{ \ncPool{\ncCnt{x}{z}{\ami}}{\ncCnt{x}{z'}{\boe}} }]{
      \Gamma, \Delta, \tmty{x}{\take[2]{\cake^\bot}} }$}
\end{prooftree}

\subsection{Servers and Sequencing}\label{sec:servers-and-sequencing}
Dual to a pool of $n$ clients in parallel is a server with $n$ actions in sequence. Our interpretation of a server is a process which offers some number of interdependent interactions of the same type. Examples include the store from our example, which gives out slices of cake and disappointment, online stores in general, and any central server which interacts with some number of client processes.

We introduce two new rules to construct servers. The first rule, $(\give[1]{})$, marks a interaction over some channel as a server interaction. It does this by sending a channel $\tm{y}$ over a \emph{shared} channel $\tm{x}$. The channel $\tm{y}$ is the channel across which the actual interaction will take place. The second rule, $\textsc{Cont}_{?}$, allows us to merge two (possibly interleaved) sequences of server interactions. This allows us to construct a server which has multiple interactions of the same type, across the same shared channel.
\begin{center}
  \ncInfGive1
  \ncInfCont
\end{center}
Using these rules, we can derive the right-hand side of our proof, by marking each of the store's interactions as server interactions, and then contracting them.
\begin{prooftree}
  \AXC{$\seq[{ \store }]{ \Theta, \tmty{w}{\cake}, \tmty{w'}{\cake} }$}
  \SYM{(\give[1]{})}
  \UIC{$\seq[{ \ncSrv{y'}{w'}{\store} }]{
      \Theta, \tmty{z}{\cake}, \tmty{y'}{\give[1]{\cake}} }$}
  \SYM{(\give[1]{})}
  \UIC{$\seq[{ \ncSrv{y}{w}{\ncSrv{y'}{w'}{\store}} }]{
      \Theta, \tmty{y}{\give[1]{\cake}}, \tmty{y'}{\give[1]{\cake}} }$}
  \SYM{\textsc{Cont}_{?}}
  \UIC{$\seq[{ \ncSrv{y}{w}{\ncSrv{x}{w'}{\store}} }]{
      \Theta, \tmty{y}{\give[2]{\cake}} }$}
\end{prooftree}
Thus, we complete the typing derivation of our example.

\begin{defi}[Typing judgements]\label{def:nc-typing-judgement}
  We extend \cref{def:cp-typing-judgement} as follows:
  {\normalfont
    \begin{center} \ncInfTake1 \ncInfGive1 \end{center}
    \begin{center} \ncInfPool  \ncInfCont  \end{center}
  }
\end{defi}

\subsection{Running Clients and Servers}\label{sec:nc-running-clients-and-servers}
Finally, we need to extend the reduction rules to allow for the reduction of client and server processes. The reduction rule we add is a variant of $\hcpRedBetaTensParr$.
\begin{defi}[Reduction]\label{def:nc-reduction}
  We extend \cref{def:hcp-reduction} as follows:
  \[
    \begin{array}{llll}
      \ncRedBetaStar{}
      & \tm{\piNew{x}{x'}{(\piPar{(\piPar{\ncCnt{x}{y}{P}}{\ncSrv{x'}{y'}{Q}})}{R})}}
      & \Longrightarrow \;
      & \tm{\piNew{x}{x'}{(\piPar{\piNew{y}{y'}{(\piPar{P}{Q})}}{R})}}
    \end{array}
  \]
\end{defi}
The difference between $\ncRedBetaStar{}$ and $\hcpRedBetaTensParr$ is that the former allows reduction to happen in the presence of an unrelated process $\tm{R}$, which is passed along unchanged. This is necessary, as there may be other clients waiting to interact with the server on the shared channel $\tm{x}$, which cannot be moved out of scope of the name restriction $\piNew{x}{}$. When there is no unrelated process $\tm{R}$, \ie, when there is only a single client, we can rewrite by $\hcpEquivMixHalt$ before and after applying $\ncRedBetaStar{}$.

So where does the non-determinism in \nodcap come from? Let us say we have a term of the following form:
\[
  \tm{
    \piNew{x}{x'}{(\piPar
    {\ncPool{\ncCnt{x}{y_1}{P_1}}{\dots \mid \ncCnt{x}{y_n}{P_n}}}
    {\ncSrv{x'}{y'_1}{\dots\ncSrv{x'}{y'_n}{Q}}})}
  }
\]
As parallel composition is commutative and associative, we can rewrite this term to pair any client in the pool with the server before applying $\ncRedBetaStar{}$. Thus, like in the \textpi-calculus, the non-determinism is introduced by the structural congruence.

Does this mean that, for an arbitrary client pool $\tm{P}$ in $\tm{\piNew{x}{y}{(\piPar{P}{\ncSrv{y}{w}{Q}})}}$, every client in that pool is competing for the server interaction on $\tm{x}$? Not necessarily, as some portion of the clients can be blocked on an external communication. For instance, in the term below, clients $\tm{\ncCnt{x}{z_{n+1}}{P_{n+1}}} \dots \tm{\ncCnt{x}{z_m}{P_m}}$ are blocked on a communication on the external channel $\tm{a}$:
\[
  \arraycolsep=0pt
  \tm{
  \begin{array}{lrl}
    \piNew{x}{x'}{}
    &  ((&\; \ncPool{\ncCnt{x}{y_1}{P_1}}{\dots\mid\ncCnt{x}{y_n}{P_n}}\\
    &\mid&\; \cpWait{a}{\ncPool{\ncCnt{x}{y_{n+1}}{P_{n+1}}}{\dots \mid \ncCnt{x}{y_m}{P_m}}}\;)\\
    &\mid&\; \ncSrv{x'}{y'_1}{\dots\ncSrv{x'}{y'_m}{Q}}\;)
  \end{array}}
\]
If we reduce this term, then only the clients $\tm{\ncCnt{x}{z_1}{P_1}} \dots \tm{\ncCnt{x}{z_n}{P_n}}$ will be assigned server interactions, and we end up with the following canonical form:
\[
  \arraycolsep=0pt
  \tm{
  \begin{array}{lrl}
    \piNew{x}{x'}{}
    &   (&\; \cpWait{a}{\ncPool{\ncCnt{x}{y_{n+1}}{P_{n+1}}}{\dots\mid\ncCnt{x}{y_m}{P_m}}}\\
    &\mid&\; \ncSrv{x'}{y'_{n+1}}{\dots\ncSrv{x'}{y'_m}{Q}}\;)
  \end{array}}
\]
This matches our intuition and the behaviour of the \textpi-calculus. For instance, we can now encode our example, where \Ami and \Boe both send a request for cake to the store, and the store sends back either a cake or nothing:
\[
  \tm{\piNew{x}{x'}{}
    \left(
      \begin{array}{l}
        \ncCnt{x}{x}{\piRecv{x}{y}{\ami}}\ppar
        \\
        \ncCnt{x}{x}{\piRecv{x}{z}{\boe}}\ppar
        \\
        \ncSrv{x'}{x'_1}{\piUSend{x'_1}{\sliceofcake}{\ncSrv{x'}{x'_2}{\piUSend{x'_2}{\nope}{\store}}}}
      \end{array}
    \right)
  }
  \Longrightarrow^\star
  \begin{array}{c}
    \tm{(\cpSub{\sliceofcake}{y}{\ami}\ppar\cpSub{\nope}{z}{\boe}\ppar\store)}
    \\
    \text{or}
    \\
    \tm{(\cpSub{\nope}{y}{\ami}\ppar\cpSub{\sliceofcake}{z}{\boe}\ppar\store)}
  \end{array}
\]
The encoding presented above is slightly more complex than necessary: after the store receives a request as $\tm{x'_1}$, it could simply perform the cake interaction over that channel, and similarly for $\tm{x'_2}$. However, we include these actions for clarity.

\paragraph{Alternative syntax.}
If we choose to reuse the terms $\tm{\piSend{x}{y}{P}}$ and $\tm{\piRecv{x}{y}{P}}$ for shared channels, we could replace $\hcpRedBetaTensParr$ with $\ncRedBetaStar{}$, using the latter rule for both cases.

\subsection{Metatheory}
\label{sec:nc-metatheory}
\nodcap enjoys subject reduction, termination, and progress.
\begin{lem}[Preservation for $\equiv$]\label{lem:nc-preservation-equiv}
  If $\tm{P}\equiv\tm{Q}$ and $\seq[P]{\mathcal{G}}$, then $\seq[Q]{\mathcal{G}}$.
\end{lem}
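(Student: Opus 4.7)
The plan is to proceed by induction on the derivation of $\tm{P} \equiv \tm{Q}$, directly mirroring the proof of \cref{lem:hcp-preservation-equiv}. Since \cref{def:hcp-equiv} is inherited unchanged in \nodcap (no new axioms are added for the \tm{\star}-constructs), the seven base cases \hcpEquivLinkComm, \hcpEquivMixComm, \hcpEquivMixAss, \hcpEquivMixHalt, \hcpEquivNewHalt, \hcpEquivNewComm, and \hcpEquivScopeExt relate only terms built from pre-existing \hcp constructors. Inverting the typing derivation at the root can therefore only expose the structural rules (\textsc{Ax}, \textsc{Cut}, \textsc{H-Mix}, \textsc{H-Mix}$_0$) and the pre-existing logical rules; none of the new rules $(\take[1]{})$, $(\give[1]{})$, $\textsc{Cont}_{?}$, $\textsc{Cont}_{!}$ can be the final rule of such a derivation, so the argument from \hcp transfers verbatim.

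What the proof genuinely adds over \cref{lem:hcp-preservation-equiv} is the congruence-closure step for the two new term constructors from \cref{def:nc-terms}. For $\tm{\ncCnt{x}{y}{P}}$, the only applicable typing rule is $(\take[1]{})$; inverting it gives $\seq[P]{\Gamma, \tmty{y}{A}}$, and the induction hypothesis applied to $\tm{P} \equiv \tm{P'}$ yields $\seq[P']{\Gamma, \tmty{y}{A}}$, whence reapplying $(\take[1]{})$ retypes $\tm{\ncCnt{x}{y}{P'}}$ in the same hyper-environment. The case $\tm{\ncSrv{x}{y}{P}}$ is entirely symmetric via $(\give[1]{})$. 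I would also need to treat the contracted forms of $\textsc{Cont}_{?}$ and $\textsc{Cont}_{!}$, which are written as the substitution $\tm{\cpSub{x}{x'}{P}}$; this reduces to a routine substitution lemma showing that $\cpSub{x}{x'}{\cdot}$ respects $\equiv$, so that the IH on the underlying $\tm{P}$ lifts to the substituted term after reapplying the relevant contraction rule.

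The main obstacle, as in \hcp, is \hcpEquivScopeExt: from $\seq[\piNew{x}{x'}{(\piPar{P}{Q})}]{\mathcal{G}}$ with $\tm{x}, \tm{x'} \notin \tm{P}$, one must split the hypercontext so that the environment(s) typing $\tm{P}$ sit outside the scope of $\piNew{x}{x'}{}$, and then re-derive $\seq[\piPar{P}{\piNew{x}{x'}{Q}}]{\mathcal{G}}$ by \textsc{H-Mix} over \textsc{Cut}. Because $\tm{x}$ and $\tm{x'}$ do not occur free in $\tm{P}$ and because names within any single environment are distinct (\cref{def:cp-environments}), neither endpoint can appear in the portion of the hyper-environment assigned to $\tm{P}$; the split used for \hcp therefore goes through unchanged, and the fact that the new rules never manipulate the hyperseparator $\hsep$ means the scope-extension reshuffling is compatible with any prior applications of $\textsc{Cont}_{?}$ or $\textsc{Cont}_{!}$ deeper in the derivation. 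Finally, reflexivity, symmetry, and transitivity of the congruence are handled as in \hcp, giving the result.
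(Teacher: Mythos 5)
Your overall strategy --- induction on the derivation of $\tm{P}\equiv\tm{Q}$ --- is exactly the paper's (its proof is stated in one line), but your elaboration rests on a false inversion claim, and that is precisely where the one genuinely new difficulty of the \nodcap version lives. You assert that for the seven congruence axioms ``none of the new rules $(\take[1]{})$, $(\give[1]{})$, $\textsc{Cont}_{?}$, $\textsc{Cont}_{!}$ can be the final rule of such a derivation''. This is wrong for the two contraction rules: their conclusions type a term of the form $\tm{\cpSub{x}{x'}{P}}$, and substitution does not change the outermost constructor, so $\textsc{Cont}_{?}$ or $\textsc{Cont}_{!}$ can perfectly well be the last rule in a derivation of a mix, a cut, a link, or indeed of $\tm{\ncCnt{x}{y}{P}}$ (where you likewise claim $(\take[1]{})$ is ``the only applicable typing rule''). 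For example, $\seq[{\piPar{P}{Q}}]{\mathcal{G}}$ may end with $\textsc{Cont}_{!}$ applied to $\seq[{\piPar{P_0}{Q_0}}]{\mathcal{G}_0}$ with $\tm{P}=\tm{\cpSub{x}{x'}{P_0}}$ and $\tm{Q}=\tm{\cpSub{x}{x'}{Q_0}}$. Consequently every case of your induction, not just two extra congruence cases, must cope with a stack of trailing contractions; you need an auxiliary step (\eg a lemma commuting $\textsc{Cont}_{?}$ and $\textsc{Cont}_{!}$ past each structural-congruence axiom, or a strengthened induction that peels off trailing contractions) before the \hcp argument can be invoked. Relatedly, you misfile the contraction rules under the ``congruence-closure step for the two new term constructors'': $\tm{\cpSub{x}{x'}{P}}$ is not a constructor, so there is no congruence case for it, and the substitution lemma you mention would be needed in the converse direction --- recovering $\tm{Q_0}$ with $\tm{P_0}\equiv\tm{Q_0}$ and $\tm{Q}=\tm{\cpSub{x}{x'}{Q_0}}$ from $\tm{\cpSub{x}{x'}{P_0}}\equiv\tm{Q}$ --- which is the non-trivial direction.

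A second, smaller slip: you claim the new rules ``never manipulate the hyperseparator'', but $\textsc{Cont}_{!}$ merges two hyper-environment components into one, exactly as \textsc{Cut} does but without introducing a binder. This matters for \hcpEquivScopeExt and \hcpEquivMixAss, where your argument reads off a decomposition of the hyper-environment from the parallel structure of the term; with $\textsc{Cont}_{!}$ occurring in the derivation that correspondence is no longer immediate and has to be re-established.
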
 
\begin{proof}
  By induction on the derivation of $\tm{P}\equiv\tm{Q}$.
\end{proof}
\begin{thm}[Preservation]\label{thm:nc-preservation}
  If $\seq[P]{\mathcal{G}}$ and $\reducesto{P}{Q}$, then $\seq[Q]{\mathcal{G}}$.
\end{thm}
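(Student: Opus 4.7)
The plan is a proof by induction on the derivation of $\reducesto{P}{Q}$. Most cases come over directly from \cref{thm:hcp-preservation}: the $\beta$-rules inherited from \hcp do not touch the new connectives $\give[n]{}$ and $\take[n]{}$, so their typing arguments transfer verbatim; the context-closure rules $\hcpRedGammaNew$ and $\hcpRedGammaMix$ reduce to the induction hypothesis followed by a single application of \textsc{Cut} or \textsc{H-Mix}; and $\hcpRedGammaEquiv$ uses \cref{lem:nc-preservation-equiv} to transport the typing across the two structural congruences bracketing the inner reduction.

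The one genuinely new case is $\ncRedBetaStar{}$, $\tm{\piNew{x}{x'}{(\piPar{(\piPar{\ncCnt{x}{y}{P}}{\ncSrv{x'}{y'}{Q}})}{R})}} \Longrightarrow \tm{\piNew{x}{x'}{(\piPar{\piNew{y}{y'}{(\piPar{P}{Q})}}{R})}}$. I would proceed by inverting the typing derivation of the LHS. The outermost \textsc{Cut} cuts on a dual pair placed on $\tm{x}$ and $\tm{x'}$; since $\ncCnt{x}{y}{P}$ can only be typed via $(\take[1]{})$ giving $\tmty{x}{\take[1]{A}}$, and dually $\ncSrv{x'}{y'}{Q}$ only via $(\give[1]{})$ giving $\tmty{x'}{\give[1]{A^\bot}}$, the cut formula must have the shape $\ty{\take[n]{A}}$ for some $n \geq 1$, built by iterated applications of $\textsc{Cont}_!$ and dually $\textsc{Cont}_?$. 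Peeling off these contractions on each side yields sub-derivations typing $\tm{P}$ in an environment containing $\tmty{y}{A}$, typing $\tm{Q}$ in one containing $\tmty{y'}{A^\bot}$, and a remainder that packages the other $n{-}1$ clients and $n{-}1$ server interactions together with $\tm{R}$.

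Reconstruction of the RHS then goes as follows: introduce the inner \textsc{Cut} on $\tm{y},\tm{y'}$ to combine the bodies of $\tm{P}$ and $\tm{Q}$; parallel-compose the result with the remainder via \textsc{H-Mix}; and conclude with the outer \textsc{Cut} on $\tm{x},\tm{x'}$ at the reduced types $\ty{\take[n{-}1]{A}}$ and $\ty{\give[n{-}1]{A^\bot}}$. Duality of the residual cut formula is automatic because $\textsc{Cont}_!$ and $\textsc{Cont}_?$ decrement multiplicities symmetrically. The degenerate case $n=1$, in which there is no residual client or server, is handled by pre- and post-composing with $\piHalt$ and applying $\hcpEquivMixHalt$, exactly as discussed after \cref{def:nc-reduction}.

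The main obstacle I anticipate is the inversion step in the presence of $\textsc{Cont}_!$ and $\textsc{Cont}_?$. Because these rules are expressed via silent name substitution $\cpSub{x}{x'}{-}$ rather than a distinguished term former, several derivations may correspond to the same term, and contractions on $\tm{x}$ may have been interleaved freely with unrelated typing rules. I therefore expect to need a small generation/permutation lemma showing that in any derivation witnessing $\seq[{\ncCnt{x}{y}{P}}]{\mathcal{G} \hsep \Gamma, \tmty{x}{\take[n]{A}}}$, the contractions on $\tm{x}$ can be pushed immediately above the unique leaf application of $(\take[1]{})$, leaving $n{-}1$ distinguished occurrences of $\tmty{-}{\take[1]{A}}$ in the remainder to be cut elsewhere (and dually on the server side). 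With that lemma in hand the reconstruction above goes through routinely.
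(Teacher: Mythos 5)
Your proposal is correct and takes essentially the same approach as the paper, whose entire proof is ``by induction on the derivation of $\reducesto{P}{Q}$''. Your elaboration of the \ncRedBetaStar{} case---in particular the permutation/generation lemma needed to invert the silent $\textsc{Cont}_{!}$ and $\textsc{Cont}_{?}$ rules, and the $n=1$ degenerate case handled via \hcpEquivMixHalt---supplies exactly the details the paper leaves implicit.
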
 
\begin{proof}
  By induction on the derivation of $\reducesto{P}{Q}$.
\end{proof}
\begin{defi}[Actions]
  A process $\tm{P}$ acts on $\tm{x}$ whenever $\tm{x}$ is free in the outermost term constructor of $\tm{P}$, \eg, $\tm{\ncSrv{x}{y}{P}}$ acts on $\tm{x}$ but not on $\tm{y}$, and $\tm{\cpLink{x}{y}}$ acts on both $\tm{x}$ and $\tm{y}$. A process $\tm{P}$ is an action if it acts on some channel $\tm{x}$.
\end{defi}
\begin{defi}[Canonical forms]\label{def:nc-canonical-forms}
  A process $\tm{P}$ is in canonical form if
  \[
  \tm{P} \equiv \tm{\piNew{x_1}{x'_1}{\dots\piNew{x_n}{x'_n}{(P_1 \mid \dots \mid P_{n+m+1})}}},
  \]
  such that: no process $\tm{P_i}$ is a cut or a mix; no process $\tm{P_i}$ is a link acting on a bound channel $\tm{x_i}$ or $\tm{x'_i}$; and no two processes $\tm{P_i}$ and $\tm{P_j}$ are acting on dual endpoints $\tm{x_i}$ and $\tm{x'_i}$ of the same channel.
\end{defi}
\begin{lem}
  If a well-typed process $\tm{P}$ is in canonical form, then it is blocked on
  an external communication, \ie,
  $\tm{P}\equiv\tm{\piNew{x_1}{x'_1}{\dots\piNew{x_n}{x'_n}{(P_1\mid\dots\mid P_{n+m+1})}}}$
  such that at least one process $\tm{P_i}$ acts on a free name.
\end{lem}
\begin{proof}
  We have
  \(
  \tm{P} \equiv \tm{\piNew{x_1}{x'_1}{\dots\piNew{x_n}{x'_n}{(P_1 \ppar \dots \ppar P_{n+m+1})}}},
  \)
  such that no $\tm{P_i}$ is a cut or a link acting on a bound channel, and no two processes $\tm{P_i}$ and $\tm{P_j}$ are acting on the same bound channel with dual actions. The prefix of cuts and mixes introduces $n$ channels. Each application of cut requires an application of mix, so the prefix introduces $n+m+1$ processes. Each application of $\textsc{Cont}_{!}$ requires an application of mix, so there are at most $m$ clients acting on the same bound channel. Therefore, at least \emph{one} of the processes $\tm{P_i}$ must be acting on a free channel, i.e., blocked on an external communication.
\end{proof}
\begin{thm}[Progress]\label{thm:nodcap-progress}
  If $\seq[P]{\mathcal{G}}$, then either $\tm{P}$ is in canonical form, or there exists a process $\tm{Q}$ such that $\tm{P}\Longrightarrow\tm{Q}$.
\end{thm}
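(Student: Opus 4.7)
The plan is to mimic the structure of the Progress proof for \hcp (\cref{thm:hcp-progress}), extending the case analysis to cover the new constructs $\tm{\ncCnt{x}{y}{P}}$ and $\tm{\ncSrv{x}{y}{P}}$. First, I would use the structural congruence to float all cuts and mixes to the top, so that
\[
  \tm{P} \equiv \tm{\piNew{x_1}{x'_1}{\dots\piNew{x_n}{x'_n}{(P_1 \ppar \dots \ppar P_{n+m+1})}}},
\]
where no $\tm{P_i}$ is itself a cut or a mix. Existence of this decomposition follows from \cref{lem:nc-preservation-equiv} together with \hcpEquivMixAss, \hcpEquivMixComm, \hcpEquivScopeExt, and \hcpEquivNewComm, exactly as in the \hcp case.

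Next I would case-analyse on whether any reduction is enabled among the $\tm{P_i}$. If some $\tm{P_i}$ is a link $\tm{\cpLink{w}{x_k}}$ or $\tm{\cpLink{w}{x'_k}}$ acting on a bound endpoint, reduce by \hcpRedAxCut{}. If two processes $\tm{P_i}$, $\tm{P_j}$ act on dual endpoints $\tm{x_k}$, $\tm{x'_k}$ of the same bound channel with non-shared types, reduce by the appropriate $\beta$-rule (\hcpRedBetaTensParr, \hcpRedBetaOneBot, or \hcpRedBetaPlusWith{}), after rewriting by \hcpEquivMixComm{}/\hcpEquivMixAss{} to bring $\tm{P_i}$ and $\tm{P_j}$ adjacent. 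If neither of these applies and no two $\tm{P_i}$, $\tm{P_j}$ act on dual endpoints of a bound channel, we are in canonical form in the sense of \cref{def:nc-canonical-forms}.

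The only genuinely new case is when the bound channel has a shared type $\ty{\take[k]{A}}/\ty{\give[k]{A^\bot}}$. Here there may be \emph{several} processes among $\tm{P_1},\dots,\tm{P_{n+m+1}}$ acting on the same bound endpoint $\tm{x_k}$: by the typing rule $\textsc{Cont}_{!}$ applied (possibly with intervening \textsc{H-Mix}) to $\tmty{x_k}{\take[k]{A}}$, we may have up to $k$ clients $\tm{\ncCnt{x_k}{y_i}{P'_i}}$ pooled together on $\tm{x_k}$, and dually a sequence of server actions $\tm{\ncSrv{x'_k}{y'_j}{\cdots}}$ on the other endpoint. I would observe that in this situation the structural congruence \hcpEquivMixComm{}/\hcpEquivMixAss{} lets me pick any one client and pair it with the outermost server action, isolating the remaining clients as the ``unrelated process'' $\tm{R}$ in \ncRedBetaStar{}. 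The redex then matches the left-hand side of $\ncRedBetaStar{}$ exactly, and reduction proceeds.

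The main obstacle is this last case: I need to ensure that whenever a bound endpoint carries type $\ty{\take[k]{A}}$, at least one client is \emph{not} blocked on some other bound channel, and dually that the corresponding server endpoint $\tm{x'_k}$ has an exposed $\tm{\ncSrv{x'_k}{}{}}$ prefix rather than being buried inside another redex. For the former, I appeal to the preceding lemma that at most $m$ clients can be pooled on the same bound channel while $n+m+1$ processes are available, so if no other reduction fires, the pool and the server sit at the top of their respective $\tm{P_i}$. For the latter, I rely on the fact that the $(\give[1]{})$ rule places $\tm{\ncSrv{x'_k}{}{}}$ as the outermost constructor of whichever $\tm{P_j}$ types $\tmty{x'_k}{\give[k]{A^\bot}}$, so that endpoint is immediately ready to receive. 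Once these observations are in place, all cases are handled and the theorem follows.
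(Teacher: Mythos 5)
Your proposal is correct and takes essentially the same route as the paper's proof: decompose $\tm{P}$ up to $\equiv$ into a maximal prefix of cuts and mixes over $\tm{P_1},\dots,\tm{P_{n+m+1}}$, reduce by \hcpRedAxCut{} if some $\tm{P_i}$ is a link on a bound endpoint, reduce by the appropriate $\beta$-rule (here including \ncRedBetaStar{}, with the remaining clients absorbed into the unrelated process $\tm{R}$) if two $\tm{P_i}$, $\tm{P_j}$ act on dual bound endpoints, and otherwise conclude canonical form. The only remark is that the ``main obstacle'' in your last paragraph is not actually an obligation: if every client on $\tm{x_k}$ is guarded by some other prefix then no $\tm{P_i}$ \emph{acts} on $\tm{x_k}$ in the sense of the definition, so that channel contributes no violation of canonical form and the theorem's disjunction is satisfied without firing a reduction there.
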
 
\begin{proof}
  We consider the maximum prefix of cuts and mixes of $\tm{P}$ such that
  \[
  \tm{P} \equiv \tm{\piNew{x_1}{x'_1}{\dots\piNew{x_n}{x'_n}{(P_1 \ppar \dots \ppar P_{n+m+1})}}},
  \]
  and no $\tm{P_i}$ is a cut. If any process $\tm{P_i}$ is a link, we reduce by $(\cpLink{}{})$. If any two processes $\tm{P_i}$ and $\tm{P_j}$ are acting dual endpoints $\tm{x_i}$ and $\tm{x'_i}$ of the same channel, we rewrite by $\equiv$ and reduce by the appropriate $\beta$-rule. Otherwise, $\tm{P}$ is in canonical form.
\end{proof}
\begin{thm}[Termination]\label{thm:nodcap-termination}
  If $\seq[P]{\mathcal{G}}$, then there are no infinite $\Longrightarrow$-reduction sequences.
\end{thm}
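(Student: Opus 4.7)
The plan is to adapt the strategy of \cref{thm:hcp-termination} by defining a measure on processes that is invariant under structural congruence and strictly decreasing under every reduction step, and then appealing to well-foundedness of the multiset order on $\mathbb{N}$. The measure will be the multiset of sizes of the cut formulas carried by the process's $\nu$-bindings, read off using \cref{thm:nc-preservation} along any reduction sequence.

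The key design choice is the size function on the new type constructors. A single $\ncRedBetaStar{}$ step turns a cut on $\ty{\take[n]{A}}$ into a cut on $\ty{A}$ (from the freshly introduced $\nu$-binding on $\tm{y},\tm{y'}$) while re-typing the surviving outer cut, by \cref{thm:nc-preservation}, at $\ty{\take[n-1]{A}}$. To make this step strictly decreasing, I would set $|\take[n]{A}| = |\give[n]{A}| = n(1 + |A|)$ and inherit every other clause of $|\cdot|$ from the HCP proof. A quick calculation gives $|\take[n]{A}| - |\take[n-1]{A}| = 1 + |A| > |A|$, so both of the new cut formulas are strictly smaller than the one consumed, and the multiset strictly decreases in the multiset order. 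None of the equivalences in \cref{def:hcp-equiv} adds, removes, or retypes a cut, so the measure is invariant under $\equiv$.

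The reductions inherited from HCP (\hcpRedAxCut1, \hcpRedBetaTensParr, \hcpRedBetaOneBot, \hcpRedBetaPlusWith1, \hcpRedBetaPlusWith2) are handled exactly as in the proof of \cref{thm:hcp-termination}: each replaces one cut by zero, one, or two strictly smaller cuts. The closure rules \hcpRedGammaNew, \hcpRedGammaMix, and \hcpRedGammaEquiv propagate the strict decrease through contexts and $\equiv$-classes, using \cref{lem:nc-preservation-equiv}. The main (mild) obstacle is the inversion needed for $\ncRedBetaStar{}$: one must confirm that the outer cut in a well-typed redex really carries a formula $\ty{\take[n]{A}}$ with $n \geq 1$, which follows from \cref{thm:nc-preservation} together with the observation that the introduction rules for $\ty{\take[n]{A}}$ never produce $n = 0$.
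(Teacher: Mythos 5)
Your proposal is correct and takes essentially the same route as the paper, which argues (very tersely) that every reduction replaces a single cut by zero, one, or two cuts on strictly smaller formulas while structural congruence preserves cut sizes. The only thing you add is the explicit clause $|\take[n]{A}| = |\give[n]{A}| = n(1+|A|)$ making the $\ncRedBetaStar{}$ case strictly decreasing, a detail the paper leaves implicit but which is needed and which you supply correctly.
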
 
\begin{proof}
  Every reduction reduces a single cut to zero, one or two cuts. However, each of these cuts is smaller, measured in the size of the cut formula. Furthermore, each instance of the structural congruence preserves the size of the cut. Therefore, there cannot be an infinite $\Longrightarrow$-reduction sequence.
\end{proof}

\subsection{\nodcap and Non-deterministic Local Choice}\label{sec:nc-local-choice}
In \cref{sec:local-choice}, we discussed the non-deterministic local choice operator, which is used in several extensions of \piDILL and \cp~\cite{atkey2016,caires2014,caires2017}. This operator is admissible in \nodcap. We can derive the non-deterministic choice \tm{P+Q} by constructing the following term:
\[%
  \arraycolsep=0pt
  \tm{
  \begin{array}{lrlrl}
    \piNew{x}{x'}{}
    &((  & \; \ncCnt{x}{y}{\cpInl{y}{\cpHalt{y}}} \\
    &\mid& \; \ncCnt{x}{z}{\cpInr{z}{\cpHalt{z}}} \; )\\
    &\mid& \; \ncSrv{x'}{y'}{\ncSrv{x'}{z'}{y'\:{\triangleright}}}\\
    &    & \quad
           \begin{array}{rl}
             \{ & \texttt{inl}: \; \piNew{w}{w'}{(\piPar{\cpCase{z'}{\cpWait{z'}{\cpHalt{w}}}{
                  \cpWait{z'}{\cpHalt{w}}}}{\cpWait{w'}{P}})}
             \\
             ; & \texttt{inr}: \; \piNew{w}{w'}{(\piPar{\cpCase{z'}{\cpWait{z'}{\cpHalt{w}}}{
                 \cpWait{z'}{\cpHalt{w}}}}{\cpWait{w'}{Q}})} \; \})
           \end{array}
  \end{array}
  }
\]
This term is a cut between two processes.
\begin{itemize}
\item
  On the left-hand side, we have a pool of two processes, $\tm{\ncCnt{x}{y}{\cpInl{y}{\cpHalt{y}}}}$ and $\tm{\ncCnt{x}{z}{\cpInr{z}{\cpHalt{z}}}}$. Each makes a choice: the first sends $\tm{\texttt{inl}}$, and the second sends $\tm{\texttt{inr}}$. 
\item
  On the right-hand side, we have a server with both $\tm{P}$ and $\tm{Q}$. This server has two channels on which a choice is offered, $\tm{y'}$ and $\tm{z'}$. The choice on $\tm{y'}$ selects between $\tm{P}$ and $\tm{Q}$. The choice on $\tm{z'}$ does not affect the outcome of the process at all. Instead, it is discarded.
\end{itemize}
When these clients and the server are put together, the choices offered by the server will be non-deterministically lined up with the clients which make choices, and either $\tm{P}$ or $\tm{Q}$ will run.

While there is a certain amount of overhead involved in this encoding, it scales linearly in terms of the number of processes. The reverse---encoding the non-determinism present in \nodcap using non-deterministic local choice---scales exponentially, see, \eg, the examples in \cref{sec:local-choice}.

\section{Cuts with Leftovers}\label{sec:leftovers}
So far, our account of a non-determinism in client/server interactions only allows for interactions between equal numbers of clients and server interactions. A natural question is whether or not we can deal with the scenario in which there are more client than server interactions or vice versa, \ie, whether or not the following rules are derivable:
\begin{center}
  \begin{prooftree*}
    \AXC{$\seq{ \Gamma, \ty{\take[n+m]{A}} }$}
    \AXC{$\seq{ \Delta, \ty{\give[n]{A^\bot}} }$}
    \BIC{$\seq{ \Gamma, \Delta, \ty{\take[m]{A}} }$}
  \end{prooftree*}
  \begin{prooftree*}
    \AXC{$\seq{ \Gamma, \ty{\take[n]{A}} }$}
    \AXC{$\seq{ \Delta, \ty{\give[n+m]{A^\bot}} }$}
    \BIC{$\seq{ \Gamma, \Delta, \ty{\give[m]{A^\bot}} }$}
  \end{prooftree*}
\end{center}
These rules are derivable using a link. For instance, we can derive the rule for the case in which there are more clients than servers as follows:
\begin{prooftree}
  \AXC{$\seq[{ P }]{ \Gamma, \tmty{x}{\take[n+m]{A}} }$}
  \AXC{$\seq[{ Q }]{ \Delta, \tmty{x'}{\give[n]{A^\bot}} }$}
  \AXC{$\seq[{ \cpLink{x''}{w} }]{
      \tmty{x''}{\give[m]{A^{\bot}}}, \tmty{w}{\take[m]{A}} }$}
  \NOM{H-Mix}
  \BIC{$\seq[{ (\piPar{Q}{\cpLink{x''}{w}}) }]{
      \Delta, \tmty{x'}{\give[n]{A^\bot}} \hsep
      \tmty{x''}{\give[m]{A^{\bot}}}, \tmty{w}{\take[m]{A}} }$}
  \SYM{\textsc{Cont}_{!}}
  \UIC{$\seq[{ {(\piPar{Q}{\cpLink{x'}{w}})} }]{
      \Delta, \tmty{x'}{\give[n+m]{A^\bot}}, \tmty{w}{\take[m]{A}} }$}
  \NOM{H-Mix}
  \BIC{$\seq[{ (\piPar{P}{(\piPar{Q}{\cpLink{x'}{w}})}) }]{
      \Gamma, \tmty{x}{\take[n+m]{A}} \hsep
      \Delta, \tmty{x'}{\give[n+m]{A^\bot}}, \tmty{w}{\take[m]{A}} }$}
  \NOM{Cut}
  \UIC{$\seq[{ \piNew{x}{x'}{(\piPar{P}{(\piPar{Q}{\cpLink{x'}{w}})})} }]{
      \Gamma, \Delta, \tmty{w}{\take[m]{A}} }$}
\end{prooftree}

\section{Relation to Manifest Sharing}\label{sec:manifest}
In \cref{sec:local-choice}, we mentioned related work which extends \piDILL and \cp with non-deterministic local choice~\cite{atkey2016,caires2014,caires2017}, and contrasted these approaches with ours.
In this section, we will contrast our work with the more recent work on manifest sharing~\cite{balzer2017}.

Manifest sharing extends the session-typed language \SILL with two connectives, $\ty{\acquire{A}}$ and $\ty{\release{A}}$, which represent the places in a protocol where a shared resource is aquired and released, respectively. In the resulting language, \SILLS, we can define a type for, \eg, shared queues (using the notation for types introduced in this paper):
\[
  \ty{\queue{A}} ::=
  \ty{\acquire{
      (\; A^\bot \parr \release{(\queue{A})} \;)
      \with
      (\; (A \plus \bot) \parr \release{(\queue{A})} \;)
    }}
\]
The type $\ty{\queue{A}}$ types a shared channel which, after we aqcuire exclusive access, gives us the choice between enqueuing a value ($\ty{A^\bot}$) and releasing the queue, or dequeuing a value if there is any ($\ty{A \plus \bot}$) and releasing the queue.

The language \SILLS is much more expressive than \nodcap, as it has support for both shared channels and recursion.
In fact, Balzer, Pfenning, and Toninho~\cite{balzer2018} show that \SILLS is expressive enough to embed the untyped asynchronous \textpi-calculus. This expressiveness comes with a cost, as \SILLS processes are not guaranteed to be deadlock free, though recent work addresses this issue~\cite{balzer2019}.

Despite the difference in expressiveness, there are some similarities between \nodcap and \SILLS. In the former, shared channels represent (length-indexed) streams of interactions of the same type. In the latter, it is necessary for type preservation that shared channels are always released at the same type at which they were acquired, meaning that shared channels also represent (possibly infinite) streams of interactions of the same type. In fact, in \nodcap, the type for queues (with $n$ interactions) can be written as $\ty{\take[n]{(A^\bot \with (A \plus \bot))}}$.

One key difference between \nodcap and \SILLS is that in \SILLS a server must finish interacting with one client before interacting with another, whereas in \nodcap the server may interact with multiple clients simultaneously.

\section{Discussion and Future Work}\label{sec:conclusion}
We presented \nodcap, an extension of \hcp which permits non-deterministic communication without losing the strong connection to logic. We gave proofs for preservation, progress, and termination for the term reduction system of \nodcap. We showed that we can define non-deterministic local choice in \nodcap.

Our formalism so far has only captured servers that provide for a fixed number of clients.  More realistically, we would want to define servers that provide for arbitrary numbers of clients.  This poses two problems: how would we define arbitrarily-interacting stateful processes, and how would we extend the typing discipline of \nodcap to account for them without losing its static guarantees.

One approach to defining server processes would be to combine \nodcap with structural recursion and corecursion, following the $\mu\text{CP}$ extension of Lindley and Morris~\cite{lindley2016}.  Their approach can express processes which produce streams of \ty{A} channels. Such a process would expose a channel with the co-recursive type \ty{\nu X. A \parr (1 \plus X)}.  Given such a process, it is possible to produce a channel of type \ty{A \parr A \parr \cdots \parr A} for any number of \ty{A}s, allowing us to satisfy the type \ty{\give[n]{A}} for an arbitrary $n$.

We would also need to extend the typing discipline to capture arbitrary use of shared channels.  One approach would be to introduce resource variables and quantification.  Following this approach, in addition to having types \ty{\give[n] A} and \ty{\take[n] A} for concrete $n$, we would also have types \ty{\give[x] A} and \ty{\take[x] A} for resource variables $x$.  These variables would be introduced by quantifiers \ty{\forall x A} and \ty{\exists x A}. Defining terms corresponding to \ty{\forall x A}, and its relationship with structured recursion, presents an interesting area of further work.

Our account of \hcp did not include the exponentials \ty{\give A} and \ty{\take A}. The type \ty{\take A} denotes arbitrarily many independent instances of \ty{A}, while the type \ty{\give A} denotes a concrete (if unspecified) number of potentially-dependent instances of \ty{A}.  Existing interpretations of linear logic as session types have taken \ty{\take A} to denote \ty{A}-servers, while \ty{\give A} denotes \ty{A}-clients.  However, the analogy is imperfect: while we expect servers to provide arbitrarily many instances of their behaviour, we also expect those instances to be interdependent.

With quantification over resource variables, we can give precise accounts of both \cp's exponentials and idealised servers and clients. \cp exponentials could be embedded into this framework using the definitions $\ty{\take{A}} ::= \ty{\forall{n}\take[n]{A}}$ and $\ty{\give{A}} ::= \ty{\exists{n}{\give[n]{A}}}$. We would also have types that precisely matched our intuitions for server and client behavior: an \ty{A} server is of type \ty{\forall{n}{\give[n] A}}, as it serves an unbounded number of requests with the requests being interdependent, while a collection of \ty{A} clients is of type \ty{\exists{n}{\take[n] A}}, as we have a specific number of clients with each client being independent.

\bibliographystyle{alpha}
\bibliography{main}

\end{document}